\documentclass[journal]{IEEEtran}
\usepackage[T1]{fontenc}% optional T1 font encoding

\usepackage{multirow}
\usepackage{tablefootnote}

\usepackage{cite}

\usepackage{subfigure}

\usepackage{graphicx}

\usepackage{amsmath}
\usepackage{amsfonts}
\usepackage{amsthm}
\newtheorem{theorem}{Theorem}
\newtheorem{lemma}{Lemma}
\newtheorem{corollary}{Corollary}
\interdisplaylinepenalty=2500

\usepackage{bm}

\usepackage{balance}

\usepackage{algorithm}
\usepackage{algorithmicx}
\usepackage{algpseudocode}
 % Use Input in the format of Algorithm
 % Use Output in the format of Algorithm

\usepackage{array}

\ifCLASSOPTIONcompsoc
  \usepackage[caption=false,font=normalsize,labelfont=sf,textfont=sf]{subfig}
\else
  \usepackage[caption=false,font=footnotesize]{subfig}
\fi

\hyphenation{op-tical net-works semi-conduc-tor hyper-geo-metric}

\begin{document}

\title{The Impact of Interference Cognition on the Reliability and Capacity of Industrial Wireless Communications}

\author{Yichen~Guo, Tao~Peng,~\IEEEmembership{Member,~IEEE}, Yujie~Zhao, Yijing~Niu, Wenbo~Wang,~\IEEEmembership{Senior Member,~IEEE}% <-this % stops a space
\thanks{Yichen Guo, Tao Peng, Yujie Zhao, Yijing Niu and Wenbo Wang
are with the Key Laboratory of Universal Wireless Communications,
Ministry of Education, Beijing University of Posts and Telecommunications,
Beijing, 100876, P.R. China (e-mail: \{guoyichen, pengtao, yjzhao, silhouette, wbwang\}@bupt.edu.cn). (Corresponding author: Tao Peng)}% <-this % stops a space
\thanks{This work was supported in part by
the China National Key R\&D Program (No. 2022YFB3303700) and BUPT Excellent Ph.D Students Foundation (No. CX2023240).}}

% The paper headers
\markboth{IEEE Transactions on Vehicular Technology,~Vol.~XX, No.~X, XXX~202X}%
{Guo \MakeLowercase{\textit{et al.}}: The Impact of Interference Cognition on the Reliability and Capacity of Industrial Wireless Communications}

% make the title area
\maketitle

\begin{abstract}
      Interference significantly impacts the performance of industrial wireless networks, particularly n severe interference environments with dense networks reusing spectrum resources intensively. Although delicate interference information is often unavailable in conventional networks, emerging interference cognition techniques can compensate this critical problem with possibly different precision. This paper investigates the relationship between precision of interference cognition and system performance. We propose a novel performance analysis framework that quantifies the impact of varying interference information precision on achievable rate.

      Specifically, leveraging the Nakagami-$\bm{m}$ fading channel model, we analytically and asymptotically analyze the average achievable rate in the finite blocklength regime for different  precision levels of signal and interference information. Our findings reveal the critical importance of identifying per-link interference information for achieving optimal performance. Additionally, obtaining instantaneous information is more beneficial for signal links.
\end{abstract}

\begin{IEEEkeywords}
      Short-packet communications, fading channel, average achievable rate, industrial wireless communications.
\end{IEEEkeywords}

\IEEEpeerreviewmaketitle

\section{Introduction} \label{intro}

\IEEEPARstart{T}{he} telecommunications industry is thriving, fueled by advancements in mobile technology and a surge in application demands. 5G, with its enhanced mobile broadband (eMBB) capabilities offering significant data rate, latency, and coverage improvements over previous generations \cite{itur.m2083}, has garnered substantial attention.

Moreover, 5G introduces two novel paradigms: ultra-reliable low-latency communication (URLLC) and massive machine-type communication (mMTC). URLLC ensures industry-grade reliability and latency (reliability of $1 - 10^{-5}$ for a 32-byte packet with 1ms latency \cite{3gpp.38.913}), while mMTC enables massive device connectivity \cite{Jiang2021}. These advancements significantly expand the potential of 5G across various industries.

As 5G development progresses, the industry is shifting towards 6G. While still nascent, a consensus is emerging regarding its key features. Building upon 5G, 6G anticipates enhancements to URLLC, which will advance into hyper-reliable and low-latency communication (HRLLC), guaranteeing exceptional reliability and latency performance for mission-critical applications\cite{itur.m2160}, for example, smart manufacturing, industrial Internet of Things (IIoT) and vehicle-to-everything (V2X) communications.

As the telecommunications technology embracing vertical industries with increasing flexibility and adaptability, performance analysis in the finite blocklength (FBL) regime is becoming more and more crucial for understanding the capabilities of industrial wireless communications, as the blocklength is usually constrained by the latency requirement, rendering the assumption of infinite blocklength invalid. In terms of rate analysis, while significant progress has been made in FBL performance analysis, much of this work focuses on ideal scenarios and does not consider interference \cite{Polyanskiy2010, Polyanskiy2011, Yang2014}.

Recent research has attempted to address this limitation by incorporating elements like non-orthogonal multiple access (NOMA), full duplex (FD), relay or multiple-input multiple-output (MIMO) techniques. Zheng \emph{et al.} \cite{Zheng2019} analyze a downlink NOMA system with perfect successive interference cancellation (SIC) decoding, considering only one interference source. Some works employ a similar system model but consider imperfect SIC decoding \cite{Tran2021, Hoang2024}. In contrast, Le \emph{et al.} \cite{Le2022} focus on uplink NOMA systems with perfect SIC decoding and fixed decoding order, which may necessitate further refinements for practical applications. FD and reconfigurable intelligent surfaces (RIS) are often used synergistically to enhance system flexibility in relay systems \cite{Gu2018a}. However, simplified approaches often neglect complex radio conditions and inter-user interference \cite{Gu2018a, Sharma2023}. Whereas in performance analysis for MIMO scenario, additive white Gaussian noise (AWGN) channel is often considered \cite{Ostman2021}, treating interference as Gaussian noise. Besides, these works majorly focus on block error rate (BLER), and achievable rate is calculated by utilizing mean BLER and instantaneous signal-to-interference-plus-noise ratio (SINR). As the instantaneous SINR can barely be obtained in grant-free transmission, these works' practicality in industrial wireless networks is in dispute.

Focusing on rate-related analysis, stochastic network modeling\cite{Zhang2024} and stochastic geometry\cite{Hesham2024, Hesham2024a} have gained prominence in research due to their strong analytical capabilities. However, industrial applications such as autonomous manufacturing present unique challenges: these environments typically feature a massive deployment of heterogeneous devices, including sensors, robotic arms, and programmable logic controllers (PLCs), within densely packed factory spaces. To satisfy stringent quality-of-service (QoS) requirements in high-density scenarios, multi-cell architectures with resource-sharing multiple access points (APs) have emerged as critical infrastructure. Regrettably, conventional stochastic network layouts inherently lack specificity in deployment configurations, a limitation that fundamentally undermines their applicability for precise performance evaluation of networks with defined topologies.

This highlights a critical challenge: existing approaches for performance analysis within FBL regime struggle to accurately capture the complex interference present in real-world multi-cell industrial wireless networks. In this case, severe inter-cell interference (ICI) significantly impacts the system's performance and QoS.

Previous attempts to model ICI have used predictive methods based on aggregated interference \cite{Mahmood2021}, stochastic analysis focusing on generalized network \cite{Zhang2021}, or approximated ICI with its expectation \cite{Zhang2022}. While these approaches offer some insights, they often lack the precision needed to accurately predict performance in complex scenarios.

Our previous work leverages a neural network (NN) \cite{Cao2020} and a non-linear regression algorithm (NLRA) \cite{Peng2022} to model ICI and predict SINR. The NN provides accurate modeling, while the NLRA enhanced it with improved interpretability and computational efficiency, enabling accurate modeling of average power for each interference link.

However, both approaches struggle with the unpredictability of fast fading (FF) and its impact on interference cognition. This paper aims to bridge this gap by investigating the precise role of interference cognition in projecting achievable rate in industrial wireless networks within the FBL regime. The main contributions of this paper can be summarized as follows:

\begin{enumerate}
      \item We propose a theoretical framework for analyzing achievable rate in the FBL regime, considering a wider range of FF distributions and complex ICI in large-scale industrial wireless networks.
      \item We analytically and asymptotically analyze achievable rate with respect to different levels of interference cognition precision. Which is of special significance in grant-free transmission, with instant information unavailable and resources pre-allocated.
      \item Our analysis reveals the crucial role of link-level interference cognition in ensuring reliable service provision. Improved interference cognition precision leads to higher projected achievable rates, particularly when combined with precise signal power knowledge. We also find that the sensitivity of achievable rate to fading severity and QoS requirements is primarily influenced by the accuracy of signal power knowledge.
\end{enumerate}

The rest of the paper is organized as follows: Section~\ref{sysModel} describes the system model; Section~\ref{DCC} elaborates on the structure of the performance analysis framework; Section~\ref{statModel} presents statistical models of SINR for different levels of interference cognition; Section~\ref{perfAnal} conducts rate and asymptotic analyses. Numerical verification and analysis are presented in Section~\ref{Sim}, and conclusions are drawn in Section~\ref{End}.

\section{System Model} \label{sysModel}
\begin{figure}[!t]
      \centering
      \includegraphics[width=3in]{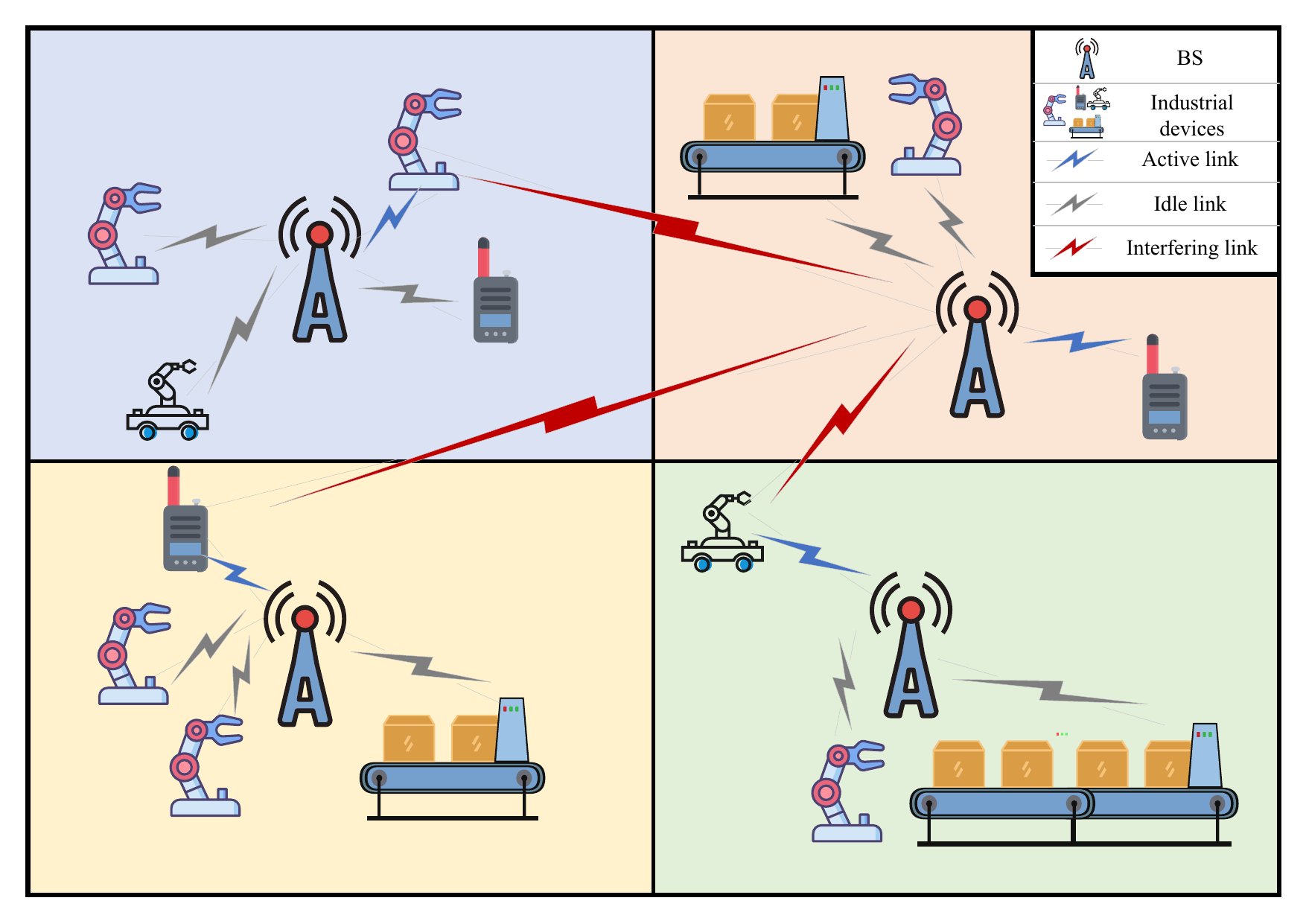}
      \caption{The system model. For brevity, only the interference links of the upper right cell is depicted.}
      \label{fig_sysModel}
\end{figure}
\subsection{System Setup}

This paper analyzes uplink orthogonal multiple access (OMA)-based industrial cellular system for elimination of intra-cell interference, as depicted in Fig. \ref{fig_sysModel}.

For simplicity, we denote the serving base station (BS) as $C_0$\footnote{Since a cell has only one BS, the cell and its BS are represented by the same symbol in this paper.}. The user equipment (UE) of interest at current resource is denoted as $U_0$. The sets $\mathbf{U}=\{U_1, U_2, \dots, U_{|\mathbf{U}|}\}$ and $\mathbf{C}=\{C_{z_1}, C_{z_2}, \dots, C_{z_{|\mathbf{U}|}}\}$ represent the active interfering UEs and their associated BSs at current resource, respectively, with $z_i$ being the index of BS subscribed by $U_i$\footnote{In this context, $z_0 = 0$. Thus, $C_{z_0} = C_0$.}. Furthermore, $\mathbf{A}$ denotes the set of all (active and inactive) interfering UEs, with $\mathbf{U} \subseteq \mathbf{A}$.

\subsection{Channel Model}

The path loss between $U_i$ and $C_{z_j}$ is denoted as $L_{ij}$. The FF for the link between a UE and its BS follows Nakagami-$m$ distribution
\footnote{The power of industrial wireless signal subjects to gamma distribution \cite{Olofsson2016, Kountouris2014}, which corresponds to Nakagami-$m$ distribution in amplitude. Further, Nakagami-$m$ distribution is versatile by adjusting parameter $m$. For instance, when $m$ equals 0.5, it resembles the half-normal distribution, and with $m = 1$, it corresponds to the Rayleigh distribution\cite{Hashemi1993}. Moreover, when $m$ is equal to $(K+1)^2 (2K+1)^{-1}$, the Nakagami-$m$ distribution can closely approximate the Rician distribution with factor $K$\cite{Yacoub1999}.}.
In industrial scenarios, multiple-input multiple-output (MIMO) technology is often used to enhance reliability. As the on-site measurements of composite channel gains show perfect fit to the gamma distribution\cite{Willhammar2024}, this analysis extends to MIMO systems as well.

Denoting the transmission power of UE $U_i$ as $P_i$, the instantaneous SINR at the serving BS for the UE is:

\begin{equation}
      \gamma = \frac{P_0 g_{0} L_{00}}{\mathcal{N} + \mathcal{I}},
      \label{eq_SINR}
\end{equation}
where $g_0$ represents the small-scale channel gain between $U_0$ and $C_0$, which follows a gamma distribution with parameters $m_0$ and $m_0^{-1}$ ($g_0 \sim \Gamma(m_0, m_0^{-1})$). $\mathcal{N}$ denotes the instantaneous power of zero-mean AWGN with variance $\sigma^2$.

$\mathcal{I}$ represents the received interfering power:
\begin{equation}
      \mathcal{I} = \sum_{i \in \mathbf{U}} P_i g_i L_{i0},
      \label{eq_rxIntf}
\end{equation}
with $g_i \sim \Gamma(m_i, m_i^{-1})$ representing the small-scale channel gain between $U_i$ and $C_0$. $\mathcal{S} = P_0 g_{0} L_{00}$ is defined as the instantaneous received power of the signal link.

\section{Deterministic Communication Capacity (DCC)}\label{DCC}

The trade-offs among capacity, reliability and latency in deterministic wireless communication systems are tricky. As latency requirements can be tackled through many ways like shortening transmission intervals (exemplified by mini-slots) and utilizing less resources (for instance, increasing capacity), in this paper, we primarily focus on the capacity and the reliability requirements.

The formula proposed in \cite{Polyanskiy2010} provided valuable theoretical tool for evaluating \emph{instantaneous} capacity. However, the implication of instantly known SINR is hardly feasible in the real-world wireless networks, especially with interference. As a result, with imperfect and delayed SINR information, we should take the additional uncertainty into consideration, leading to reduced \emph{achievable} rate, defined as DCC. In this section, we will introduce a theoretical framework for analyzing the DCC.

\subsection{Instantaneous Capacity and BLER in the FBL Regime}
In the FBL regime, the instantaneous capacity, measured by bits per channel use (bpcu), is approximated as \cite{Polyanskiy2010}
\begin{equation}
      R \approx \log_2(1+\gamma) - \sqrt{\frac{V(\gamma)}{n}} Q^{-1}(\varepsilon),
      \label{eq_InstRate}
\end{equation}
where $V(\gamma)$ is the channel dispersion, $n$ is the number of channel uses,
and $Q^{-1}(x)$ is the inverse Gaussian Q-function.
Thus, the instantaneous BLER is
\begin{equation}
      \varepsilon \approx Q\left(\sqrt{\frac{n}{V(\gamma)}}\left[\log_2(1+\gamma) - R\right]\right),
      \label{eq_InstBLER}
\end{equation}
where $Q(x)$ is the Gaussian Q-function.

Note that, the dispersion of AWGN channel is \cite{Polyanskiy2010}
\begin{equation}
      V_{AWGN}(\gamma) = (\log_2 e)^2 \left[1-(1+\gamma)^{-2}\right].
      \label{eq_AWGNdispersion}
\end{equation}
However, in fading channel, treating the interference as noise will likely make the composite noise\footnote{Conversely, we can also name it as equivalent interference (EI).} deviated from Gaussian distribution, rendering \eqref{eq_AWGNdispersion} unusable. In this case, under the mild assumption of interference independence, we will use the dispersion derived for non-Gaussian noise (NGN)\cite{Scarlett2017} instead
\begin{equation}
      V_{NGN}(\xi;\gamma) = (\log_2 e)^2 \frac{(\xi-1)\gamma^2+4\gamma}{2(\gamma+1)^2},
      \label{eq_NGNdispersion}
\end{equation}
where $\xi$ is the fourth moment of the composite noise. Since for Gaussian noise, $\xi=3$, one can directly find that $V_{NGN}(3;\gamma) = V_{AWGN}(\gamma)$. Therefore, in this paper, we will use \eqref{eq_NGNdispersion} for channel dispersion $V(\gamma)=V_{NGN}(\xi;\gamma)$, with $\xi$ specified according to the distribution of the composite noise.

\subsection{DCC Analysis Framework}
The basic structure of the framework is shown in Fig.~\ref{fig_perfFw}, consisting of 6 modules:
\begin{figure}[!t]
      \centering
      \includegraphics[width=3in]{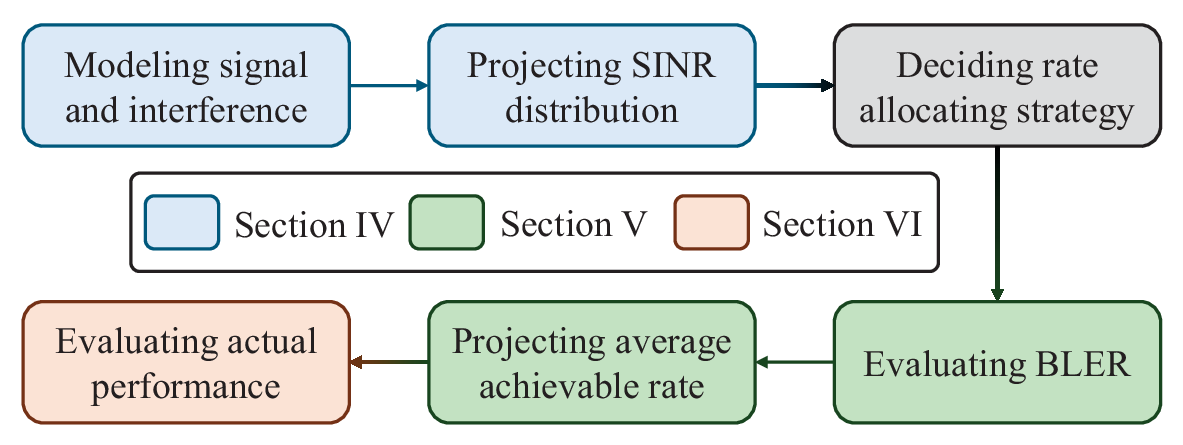}
      \caption{Framework for analyzing DCC in industrial wireless networks.}
      \label{fig_perfFw}
\end{figure}

\subsubsection{Modeling signal and interference, and projecting SINR distribution}
The uncertainty brought by imperfect and delayed SINR information results in the gap between DCC and the instantaneous capacity. Considering different capability of signal and interference cognition, we will statistically model their precision into different levels separately. Then, combining them yields different cases of SINR uncertainty, and the distribution of SINR can be projected. Further deductions of these two modules will be elaborated in Section \ref{statModel}.

\subsubsection{Deciding rate allocating strategy}
Before proceeding to the performance analysis, the scheduling strategy should be determined, based on the precision of the signal. When instantaneous signal power acquisition is possible, the allocated rate can thus be \emph{instantly and accordingly adjusted}. While if only the statistical properties of the signal power can be obtained, the transmission is implied with \emph{fixed rate allocation}.

\subsubsection{Evaluating BLER}
The instantaneous BLER can be evaluated with \eqref{eq_InstBLER}. And taking the expectation of the instantaneous BLER with respect to the SINR yields the average BLER:
\begin{equation}
      \mathbb{E}_\gamma \left[\varepsilon\right] = \int_{0}^{\infty} f_{\gamma}(x) Q\left(\sqrt{\frac{n}{V(x)}}\left[\log_2(1+x) - R\right]\right) \mathrm{d}x.
      \label{eq_expBLER}
\end{equation}

\subsubsection{Projecting average DCC}\label{rateCal}
If both signal and interference are instantly available, then taking the expectation of the instantaneous rate with respect to the instantaneous SINR yields the average DCC:
\begin{equation}
      \mathbb{E}_\gamma \left[R\right] = \mathbb{E}_\gamma \left[\log_2 (1+\gamma)\right] - \frac{Q^{-1}(\varepsilon_\mathrm{req})}{\sqrt{n}} \mathbb{E}_\gamma \left[\sqrt{V(\gamma)}\right].
      \label{eq_expRateInst}
\end{equation}

However, if only their statistical properties are available, \eqref{eq_expRateInst} will be unusable, due to the fact that there is no way to guarantee the instantaneous BLER is constant at the requirement, i.e.,$\varepsilon_\mathrm{req}$ in this case. Thus, the average DCC should make sure the expectation of the BLER meets $\varepsilon_\mathrm{req}$. That is, solving $R$ to satisfy the formula:
\begin{equation}
      \varepsilon_\mathrm{req} = \int_{0}^{\infty} f_{\gamma}(x) Q\left(\sqrt{\frac{n}{V(x)}}\left[\log_2(1+x) - R\right]\right) \mathrm{d}x.
      \label{eq_expRateMean}
\end{equation}
Actually, as the rate allocation is fixed, $R$ is both the instantaneous and the average achievable rate. Due to the extreme complexity of analytically solving \eqref{eq_expRateMean} for $R$, in this paper, we majorly conduct asymptotic analysis for these cases. 

There is also a tricky situation that the signal power is instantly available, but only the statistical properties of the interference power is known. In that situation, signal and interference should be separately considered:
\begin{equation}
      \begin{aligned}
            &\mathbb{E}_{\mathcal{S},\mathcal{I}} \left[R(\mathcal{S},\mathcal{I})\right]\\
            =&\mathbb{E}_{\mathcal{S}} \{\mathbb{E}_{\mathcal{I}}\left[R(\mathcal{S},\mathcal{I})\right]\}
      \end{aligned}
\end{equation}

For the inner expectation, similar to the situation of \eqref{eq_expRateMean}, $R$ can be solved with respect to the interference, thus reduced to $R(\mathcal{S})$. Next, solving the outer expectation will yield the desired result.

\subsubsection{Evaluating actual performance}
Finally, the evaluation result will be presented in Section \ref{Sim}, and it would be calculated according to the actual distribution of SINR, the allocated rate, and related parameters.

\subsubsection{Differences from existing approaches}
Existing frameworks for analyzing industrial wireless networks fall into two categories: stochastic geometry models based on statistical device distributions, and deterministic models assuming perfect SINR knowledge. Both suffer critical practical limitations.

Stochastic geometry methods\cite{Zhang2024, Hesham2024, Hesham2024a} offer general network insights but lack spatial specificity: their agnostic approach to deployment configurations makes them unsuitable for evaluating concrete scenarios. Conversely, perfect-SINR assumptions\cite{Tran2021, Hoang2024,Gu2018} enable instantaneous performance analysis of specific networks, yet this is fundamentally unworkable in practice, especially for grant-free systems where real-time SINR feedback is infeasible. The inherent variability of interference power (as elaborated later) further undermines their applicability.

The proposed DCC framework addresses these limitations through two key innovations: first, shifting analytical focus from device distributions to SINR statistics, enabling precise performance evaluation without requiring instant feedback; second, adopting average DCC as the primary metric instead of instantaneous measures. This approach is particularly critical for grant-free systems where average DCC becomes the sole viable basis for rate allocation. In grant-based networks, it provides a stable long-term perspective that complements instantaneous measurements, offering more comprehensive network efficiency insights.

\section{Interference Cognition and SINR Distribution}
\label{statModel}
Existing works, like \cite{Zhang2022}, have demonstrated the importance of SINR prediction for analyzing system performance and guaranteeing deterministic transmission even with coarse interference cognition. However, as evident from \eqref{eq_SINR} and \eqref{eq_rxIntf}, interference cognition plays a crucial role in SINR prediction. This section analyzes the capability of SINR prediction for different levels of interference cognition precision.

\subsection{Precision of Interference Cognition}
The received interference arises from multiple contributing links, and the identifiability of these links is crucial for modeling. Based on this and the precision of power estimation, we categorize interference cognition precision into four levels:
\begin{itemize}
      \item Level I: The instantaneous power of each link can be accurately obtained.
      \item Level D: The distribution of power for each link can be determined.
      \item Level A: The average power of each link can be determined.
      \item Level M: Only the distribution of the aggregated received interference can be inferred.
\end{itemize}

Levels I, D, and A provide link-level interference cognition. Level I represents the ideal case, which may be challenging to achieve in real-world systems and serves primarily as a theoretical upper bound. However, reaching the precision of Level A is feasible with techniques like NLRA \cite{Peng2022}. 

Conversely, Level M does not offer per-link interference information.

Figure \ref{fig_lvl_intf} illustrates these four levels using two orange and blue interference links as examples:
\begin{figure}[!t]
      \centering
      \includegraphics[width=3in]{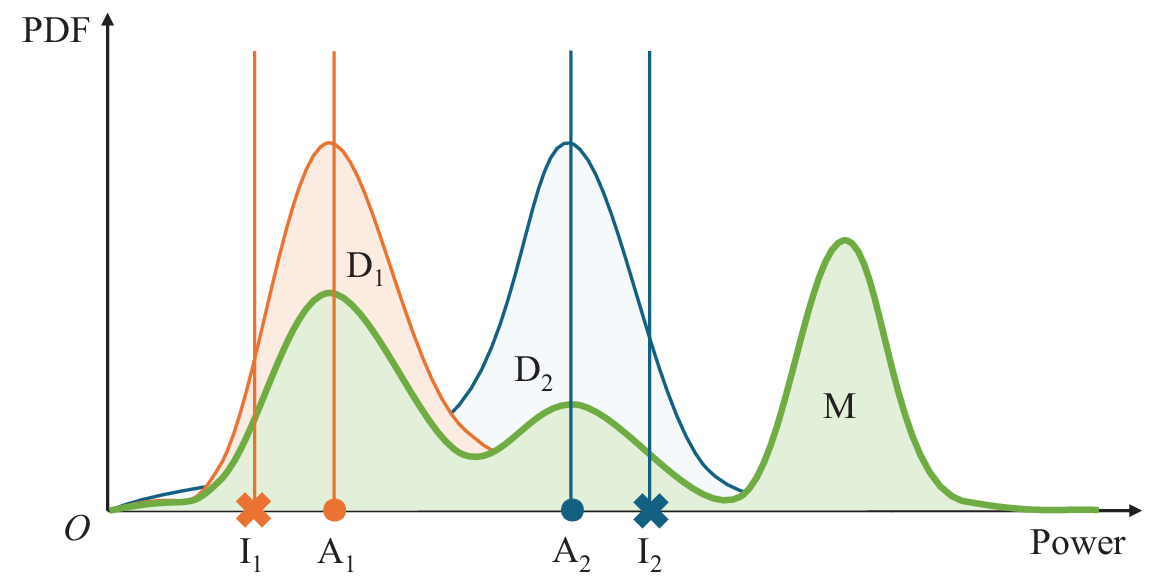}
      \caption{The four levels of precision of the interference.}
      \label{fig_lvl_intf}
\end{figure}

For Level I, the exact power for each link is known, represented by the two "$\times$" markers. For Level D, the PDF of the power is known (due to its known distribution), shown as orange and blue thin solid lines in the figure. While only the average powers of the links are known in Level A, signified by the "$\bullet$" markers.

Finally, the green solid bold line represents the probability density function (PDF) obtainable in Level M, showing peaks corresponding to every possible case with a given number of active interfering links (in Fig.~\ref{fig_lvl_intf}, the number is one).

For the signal part, with OFDMA, each UE and its associated BS have only one signal link. Therefore, we consider the following two levels:
\begin{itemize}
      \item Level I: The instantaneous power of the link can be accurately obtained.
      \item Level D: The power distribution of the link can be determined.
\end{itemize}

\subsection{SINR Distribution Projection with Different Interference Cognition}
The precision of cognition determines the uncertainty in the information used for SINR estimation, leading to different projected SINR distributions. We discuss the distributions of interference and signal separately before combining them for the SINR distribution.
\subsubsection{Distributions of instantaneous interference power}
\begin{itemize}
      \item Level I: As the instantaneous power of each interfering link is known, the interference is deterministic.
      \item Level D: Each per-link interference $I_i$ follows a Nakagami-$m$ distribution with shape parameter $m_i$ and spread parameter $\Omega_i = P_i L_{i0}$. The noise magnitude $|N|$ also follows a Nakagami-$m$ distribution with $m_N = 0.5$ and $\Omega_N = \sigma^2$. As a result, the EI power (EIP) $\mathcal{I}_E$ satisfies
      $$\mathcal{I}_E = \mathcal{I} + \mathcal{N} = \sum_{i \in \mathbf{U}} I_i^2 + |N|^2 \sim \Gamma(m_I, \frac{\Omega_I}{m_I}),$$
      with \cite{NAKAGAMI1960}
      \begin{equation}
            m_I \approx \frac{\left(\Omega_N + \sum_{i=1}^{|\mathbf{U}|} \Omega_i\right)^2}{2\Omega_N^2 + \sum_{i=1}^{|\mathbf{U}|} \left(\frac{\Omega_i^2}{m_i}\right)},
            \label{eq_aggm}
      \end{equation}
      and
      \begin{equation}
            \Omega_I = \Omega_N + \sum_{i=1}^{|\mathbf{U}|} \Omega_i.
      \end{equation}
      Thus, we have $\xi=1+m_I^{-1}$ for projections based on level D and I, and also in reality.
      \item Level A: Since the distribution is unknown, we treat it as a Gaussian distribution ($m_I = 0.5$). Similar to Level D, we have $\mathcal{I}_E \sim \Gamma(0.5, 2\Omega_I)$, and $\xi=3$ for projections.
      \item Level M: The distribution of aggregated received interference can be inferred from historical data. Other than that, only the number of scheduled UEs can be obtained. Thus, every combination of UEs of the same number can occur. As each combination corresponds to the active UE set of a level D case, the level M case can be seen as combining multiple level D cases with different sets of active interferences. Therefore, a $K$-component mixture of gamma distribution is adopted:
      \begin{equation}
            f_{\mathcal{I}_E}(x) = \sum_{k=1}^{K} \frac{f_{\mathcal{I}_{E,k}}(m_{I,k},\Omega_{I,k};x)}{K},
            \label{eq_pdfIM}
      \end{equation}
      where $K = \binom{|\mathbf{A}|}{|\mathbf{U}|}$, and $\mathcal{I}_{E,k} \sim \Gamma(m_{I,k}, \frac{\Omega_{I,k}}{m_{I,k}})$, with $m_{I,k}$ and $\Omega_{I,k}$ being the shape and spread parameter of the $k$-th combination of active interference sources. Thus, the PDF of $\mathcal{I}_{E,k}$ is $$f_{\mathcal{I}_{E,k}}(m_{I,k},\Omega_{I,k};x) = \frac{1}{\Gamma(m_{I,k}) \Omega_{I,k}^{m_{I,k}}} x^{m_{I,k}-1} e^{-x/\Omega_{I,k}}.$$
\end{itemize}

\subsubsection{Distributions of instantaneous received signal power}
\begin{itemize}
      \item Level I: Similarly, the instantaneously obtained signal power is deterministic.
      \item Level D: The distribution of signal also follows a Nakagami-$m$ distribution with shape parameter $m_0$ and spread parameter $\Omega_0 = P_0 L_{00}$. Thus, we have $\mathcal{S} \sim \Gamma(m_0, \frac{\Omega_0}{m_0})$.
\end{itemize}

\subsubsection{Distributions of projected SINR}
Combining different precision levels of interference and signal yields 8 cases of cognition precision in total.
\footnote{To streamline the discussion, we use a simplified notation, with each case represented as "signal power precision/EIP precision". For example, if the instant power of the signal link can be collected accurately (Level I), and the per-link average power of the interference can be determined (level A), it will be denoted as "I/A".}
However, the D/I case is not feasible, due to the unlikelihood of obtaining the power of interference links with higher precision than that of the signal link. Thus, there are 7 cases possible, and the corresponding projected SINR distributions are listed in Table \ref{tbl_SinrDist}.
\begin{table}[!t]
      \caption{Distribution of Projected SINR for Different Interference Cognition}
      \begin{center}
            \label{tbl_SinrDist}
            \begin{tabular}{|c|c|c|}
            \hline
            \textbf{Signal}   &\textbf{Interference}  &\textbf{Distribution of SINR} \\
            \hline
            \multirow{3}{*}{I}      &I                &Deterministic\\
                                    \cline{2-3}
                                    &D\&A             &Inverse-gamma distribution\tablefootnote{By definition, the reciprocal of a Gamma-distributed (with parameters $k$ and $\theta$) random variable (RV) yields the inverse-gamma-distributed RV with parameters $k$ and $\theta^{-1}$.}\\
                                    \cline{2-3}
                                    &M                &Mixture of Inverse-gamma distribution\\
            \hline
            \multirow{2}{*}{D}      &D\&A             &F-distribution\tablefootnote{Precisely speaking, it is the "regularized" SINR, $\hat{\gamma} = \gamma \Omega_I/\Omega_0$, that subjects to the F-distribution with parameters $2m_0$ and $2m_I$.}\\
                                    \cline{2-3}
                                    &M                &Mixture of F-distribution\tablefootnote{Similar as above, technically, "regularized" SINR of each component is subject to the F-distribution.}\\
            \hline
            \end{tabular}
      \end{center}
\end{table}

\section{DCC Analysis}\label{perfAnal}
Unlike many existing works that primarily focus on BLER, this study emphasizes the average DCC and its asymptotic behavior. Understanding these aspects is crucial, especially for resource allocation and rate analysis in industrial wireless communication systems.

\subsection{Analytical Analysis of the Average DCC}
The analytical expression of the average DCC is intractable for general cases. However, as in the I/I case, we can actually benefit from highly fluctuating interference power. Thus, its upper bound can be found by setting $m_I$ to the minimum, i.e. $m_I = 0.5$, which yields Theorem \ref{th_avgRate}.

\begin{theorem}\label{th_avgRate}
      The upper bound of average DCC in the I/I case, with $m_I = 0.5$, is
      \begin{equation}
            \bar{R} = \mathcal{C}_1 - \frac{Q^{-1}(\varepsilon_\mathrm{req})}{\sqrt{n}} \mathcal{C}_2,
            \label{eq_R1}
      \end{equation}
      where $\mathcal{C}_1$ is defined in \eqref{eq_C1} with $m_I=0.5$, and $\mathcal{C}_2$ is defined in \eqref{eq_C2}.
\end{theorem}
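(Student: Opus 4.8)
The plan is to start from the instantaneous-rate average \eqref{eq_expRateInst}, since in the I/I case each realization's SINR is known exactly, so the transmitter tracks the instantaneous capacity \eqref{eq_InstRate} and the average DCC is simply its fading average. The crucial observation is that, although the \emph{projected} SINR is deterministic, the \emph{actual} SINR governing this average is the ratio $\gamma=\mathcal{S}/\mathcal{I}_E$ of the two independent gamma variables $\mathcal{S}\sim\Gamma(m_0,\Omega_0/m_0)$ and $\mathcal{I}_E\sim\Gamma(m_I,\Omega_I/m_I)$, with $m_I$ given by \eqref{eq_aggm}. First I would derive the ratio density in closed form,
\[
f_\gamma(z)=\frac{\Gamma(m_0+m_I)}{\Gamma(m_0)\Gamma(m_I)}\,\frac{(\Omega_0/m_0)^{m_I}(\Omega_I/m_I)^{m_0}\,z^{m_0-1}}{\bigl(z\,\Omega_I/m_I+\Omega_0/m_0\bigr)^{m_0+m_I}},
\]
a scaled F-density, and record that the dispersion entering \eqref{eq_expRateInst} is $V(\gamma)=V_{NGN}(\xi;\gamma)$ with $\xi=1+m_I^{-1}$.

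Next I would evaluate the two functionals separately. The first, $\mathcal{C}_1=\mathbb{E}_\gamma[\log_2(1+\gamma)]$, is the integral of $\log_2(1+z)$ against $f_\gamma$; I would cast it in Mellin--Barnes form and match it to a tabulated Meijer-$G$ (equivalently Gauss hypergeometric) representation, yielding the closed form defining $\mathcal{C}_1$. The second, $\mathcal{C}_2=\mathbb{E}_\gamma[\sqrt{V(\gamma)}]$, is simplest exactly at the bounding point $m_I=0.5$: there $\xi=3$, so by \eqref{eq_NGNdispersion} and \eqref{eq_AWGNdispersion} the dispersion collapses to the AWGN form and $\sqrt{V(z)}=\log_2 e\,\sqrt{z(z+2)}\,(1+z)^{-1}$, while $f_\gamma$ reduces to the $F(2m_0,1)$ density; integrating this radical against that density again gives a special-function expression defining $\mathcal{C}_2$. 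Substituting both into \eqref{eq_expRateInst} produces the stated $\bar R=\mathcal{C}_1-\frac{Q^{-1}(\varepsilon_\mathrm{req})}{\sqrt{n}}\mathcal{C}_2$.

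It remains to justify that $m_I=0.5$ is the correct extreme, i.e.\ that $\bar R(m_I)$ is non-increasing on $[0.5,\infty)$, so that the value at the minimal $m_I$ upper-bounds the true average DCC. For the capacity term this is clean: $g(I)=\log(1+s/I)$ is convex in $I$, and among gamma laws of fixed mean $\Omega_I$ the spread grows as $m_I$ shrinks (a convex-order relation), so $\mathcal{C}_1(m_I)$ increases as $m_I\downarrow0.5$. The hard part will be the net sign, because the penalty term pulls the other way: as $m_I$ decreases, $\xi=1+m_I^{-1}$ grows and $V_{NGN}$ increases pointwise, enlarging the subtracted $\mathcal{C}_2$. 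I would therefore bound $\partial\bar R/\partial m_I$ by pairing the convex-order gain in $\mathcal{C}_1$ against the $\xi$-driven growth of $\mathcal{C}_2$ --- or, equivalently, establish a stochastic ordering of $\gamma$ in $m_I$ strong enough to control the concave capacity functional and the radical penalty simultaneously --- and conclude that the capacity gain dominates, giving $\bar R(m_I)\le\bar R(0.5)$ for every admissible $m_I\ge0.5$. This monotonicity, rather than either integral, is where I expect the genuine difficulty to lie.
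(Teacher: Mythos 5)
Your computational core coincides with the paper's: both start from \eqref{eq_expRateInst}, both exploit the fact that the regularized SINR $\gamma/\bar{\gamma}$ is F-distributed (your ratio-of-gammas density is correct), and both use the key observation that $m_I=0.5$ gives $\xi=3$ so that $V_{NGN}$ collapses to the AWGN dispersion \eqref{eq_AWGNdispersion}, which is exactly what makes $\mathcal{C}_2$ tractable. The only methodological divergence is in evaluating $\mathcal{C}_1$: you propose a Mellin--Barnes/Meijer-$G$ representation, whereas the paper expands $\log(1+x)=\sum_{i\ge1}\frac{1}{i}\bigl(\frac{x}{x+1}\bigr)^i$, integrates term by term via Appell's $F_1$, and uses its analytic continuation to land on a ${}_2F_1$ series \eqref{eq_C1}; for $\mathcal{C}_2$ the paper similarly expands $\sqrt{1-(1+\bar{\gamma}f)^{-2}}$ as a binomial series. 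Both routes are standard and should produce equivalent hypergeometric closed forms, so this is a legitimate alternative derivation; the paper's series route has the practical advantage of yielding an expression built only from Beta, Gamma, and ${}_2F_1$, which it explicitly notes is easy to evaluate numerically by truncation.

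Where you part company with the paper is the monotonicity of $\bar{R}$ in $m_I$. You are right that the ``upper bound'' qualifier logically requires showing $\bar{R}(m_I)\le\bar{R}(0.5)$ for all admissible $m_I$, and right that the sign is genuinely contested: the convex-order gain in $\mathbb{E}[\log_2(1+\gamma)]$ as $m_I\downarrow 0.5$ fights against the growth of $\mathcal{C}_2$ through $V(\infty)=(\log_2 e)^2/(2m_I)$. But be aware that the paper does not prove this either --- it asserts in the text preceding the theorem that one ``benefits from highly fluctuating interference'' and supports it only by the numerical observation (Fig.~\ref{fig_difm_insta}) that the I/I projected DCC decreases with $m_I$; the appendix proof is purely the closed-form evaluation at $m_I=0.5$. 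So your proposal is not missing anything the paper supplies; rather, you have correctly identified that the upper-bound claim rests on an unproven monotonicity assertion. If you want your proof to match the theorem as literally stated, you either need to close that ordering argument or scope the claim, as the paper implicitly does, to the evaluation of $\bar{R}$ at the extremal parameter $m_I=0.5$.
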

\begin{proof}
      See Appendix \ref{proof_th_avgRate}.
\end{proof}
Note that, from the derivation, \eqref{eq_C1} is independent of the channel dispersion, thus can be applied to other cases.

Unfortunately, deriving the average DCC analytically for other cases is hardly possible. However, we can analyze their asymptotic behavior, which is discussed in the next subsection.

\subsection{Asymptotic Analysis of the Average DCC}\label{avgAsy}
While analytical solutions are challenging for some cases, analyzing their asymptotes reveals important characteristics of the average DCC. This subsection presents the asymptotic expressions for five different cases under varying SINR precision levels and analyzes their implications.

\subsubsection{I/I}
The asymptotic expression for the average DCC in the I/I case can be derived from Theorem \ref{th_avgRate}.
\begin{theorem}\label{th_asyII}
      The asymptote of the average DCC in the I/I case is
      \begin{equation}
            \bar{R}_{I/I} = \frac{\log_2 10}{10}\bar{\gamma}\mathrm{(dB)} + \mathcal{C}_3,
            \label{eq_asyII}
      \end{equation}
      where $\bar{\gamma}\mathrm{(dB)} = 10 \log_{10} \bar{\gamma}$, $\bar{\gamma}$ is the average SINR, and $\mathcal{C}_3$ is defined in \eqref{eqa_IIbfinal}.
\end{theorem}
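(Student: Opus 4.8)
The plan is to start from the exact upper bound of Theorem~\ref{th_avgRate} and extract its leading behaviour as the average SINR grows large. Recalling that \eqref{eq_expRateInst} gives $\bar{R}=\mathcal{C}_1-\frac{Q^{-1}(\varepsilon_\mathrm{req})}{\sqrt{n}}\,\mathcal{C}_2$ with $\mathcal{C}_1=\mathbb{E}_\gamma[\log_2(1+\gamma)]$ and $\mathcal{C}_2=\mathbb{E}_\gamma[\sqrt{V(\gamma)}]$, where in the I/I case the actual SINR is $\gamma=\mathcal{S}/\mathcal{I}_E$ with $\mathcal{S}\sim\Gamma(m_0,\Omega_0/m_0)$ and $\mathcal{I}_E\sim\Gamma(0.5,2\Omega_I)$ (so $\gamma$ is a scaled F variable), I would treat the two terms separately, since the logarithmic growth in $\bar\gamma$ can only come from $\mathcal{C}_1$. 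Throughout I interpret $\bar\gamma=\Omega_0/\Omega_I=\mathbb{E}[\mathcal{S}]/\mathbb{E}[\mathcal{I}_E]$, the ratio of mean powers, and consider the regime $\Omega_0/\Omega_I\to\infty$.

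First I would expand $\mathcal{C}_1$. Writing $\log_2(1+\gamma)=\log_2\gamma+\log_2(1+\gamma^{-1})$ and arguing that the correction $\mathbb{E}_\gamma[\log_2(1+\gamma^{-1})]$ vanishes as $\bar\gamma\to\infty$, the leading part is $\mathbb{E}[\log_2\gamma]=\mathbb{E}[\log_2\mathcal{S}]-\mathbb{E}[\log_2\mathcal{I}_E]$ by independence of signal and interference. Using the standard identity $\mathbb{E}[\ln X]=\psi(k)+\ln\theta$ for $X\sim\Gamma(k,\theta)$, with $\psi(\cdot)$ the digamma function, this evaluates to $\log_2\bar\gamma$ plus a $\bar\gamma$-independent constant built from $\psi(m_0)$, $\psi(0.5)$, $\ln m_0$ and $\ln 0.5$. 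Converting to decibels through $\log_2\bar\gamma=\frac{\log_2 10}{10}\,\bar\gamma(\mathrm{dB})$ then produces exactly the claimed slope $\frac{\log_2 10}{10}$.

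Second, for $\mathcal{C}_2$ I would use the fact that the non-Gaussian dispersion in \eqref{eq_NGNdispersion} saturates: $V_{NGN}(\xi;\gamma)\to(\log_2 e)^2\frac{\xi-1}{2}$ as $\gamma\to\infty$, which under $m_I=0.5$, $\xi=3$ equals $(\log_2 e)^2$. Hence $\mathbb{E}_\gamma[\sqrt{V(\gamma)}]\to\log_2 e$, a constant that contributes to the offset but not to the slope. Collecting the constant from $\mathcal{C}_1$ together with $-\frac{Q^{-1}(\varepsilon_\mathrm{req})}{\sqrt{n}}$ times this limit yields $\mathcal{C}_3$ as defined in \eqref{eqa_IIbfinal}, completing \eqref{eq_asyII}.

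The main obstacle is rigorously justifying the two high-SINR interchanges of limit and expectation, because with $m_I=0.5$ the inverse interference $\mathcal{I}_E^{-1}$ is heavy-tailed (indeed $\mathbb{E}[\gamma]$ diverges), so neither the negligibility of $\mathbb{E}_\gamma[\log_2(1+\gamma^{-1})]$ nor the convergence of $\mathbb{E}_\gamma[\sqrt{V(\gamma)}]$ can be taken for granted. I would handle this by the normalization $\gamma=\bar\gamma\,\tilde\gamma$, where $\tilde\gamma$ has a fixed scaled-F law, so that the integrands converge pointwise as $\bar\gamma\to\infty$. For $\mathcal{C}_2$, bounded convergence applies immediately since $\sqrt{V(\gamma)}\le\log_2 e$. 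For $\mathcal{C}_1$, the delicate point is the small-$\gamma$ tail, where $\log_2(1+\gamma^{-1})\sim-\log_2\gamma$; I would dominate it by $\log_2(1+\tilde\gamma^{-1})$ for $\bar\gamma\ge1$ and verify integrability against the F-density, whose $\tilde\gamma^{m_0-1}$ behaviour near the origin tames the logarithmic singularity for any $m_0>0$, so that dominated convergence drives the correction to zero. This is precisely why $\bar\gamma$ must be read as the ratio of mean powers rather than $\mathbb{E}[\gamma]$.
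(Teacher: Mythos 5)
Your proposal is correct, and its intercept computation coincides with the paper's: the paper likewise splits $b$ into $f_1=\lim_{\bar{\gamma}\to\infty}\{\mathbb{E}_\gamma[\log_2(1+\gamma)]-\log_2\bar{\gamma}\}$, evaluated via the gamma log-moment identity $\mathbb{E}[\ln X]=\psi(k)+\ln\theta$, and $f_2$, evaluated via the saturation $V(\infty)=(\log_2 e)^2/(2m_I)$, yielding \eqref{eqa_IIbfinal}. Where you genuinely diverge is the slope: the paper re-derives $k=\frac{\log_2 10}{10}$ by substituting the closed-form series \eqref{eq_C1} for $\mathcal{C}_1$, collapsing the ${}_2F_1$ factors with the Gauss summation identity as their argument tends to $1$, and then taking a ratio of two divergent harmonic-type series --- a roundabout and not fully rigorous manipulation --- whereas you read the slope off directly from $\mathbb{E}[\log_2\gamma]=\log_2\bar{\gamma}+O(1)$, which is more elementary and is really the same fact that already underlies the paper's $f_1$. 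Your treatment is also more careful than the paper's on two points it silently glosses over: the dominated-convergence justification for discarding $\mathbb{E}[\log_2(1+\gamma^{-1})]$ and for passing the limit inside $\mathbb{E}[\sqrt{V(\gamma)}]$ (nontrivial precisely because $\mathbb{E}[\gamma]$ is infinite when $m_I\le 1$), and the observation that $\bar{\gamma}$ must be read as the ratio of mean powers $\Omega_0/\Omega_I$ rather than as $\mathbb{E}[\gamma]$ --- a reading that the theorem's phrase ``average SINR'' obscures but that the appendix implicitly adopts through $f=\gamma/\bar{\gamma}\sim\mathrm{F}(m_0,m_I)$. The only caveat is minor: you specialize the dispersion limit to $m_I=0.5$, $\xi=3$, while \eqref{eqa_IIbfinal} is stated for general $m_I$; the generalization $\sqrt{V(\infty)}=\log_2 e/\sqrt{2m_I}$ is immediate and restores the $Q^{-1}(\varepsilon_\mathrm{req})/\sqrt{2m_In}$ term.
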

\begin{proof}
      See Appendix \ref{proof_th_asyII}.
\end{proof}

As evident from \eqref{eq_asyII}, the asymptotic performance is influenced by shape parameters of both signal and interference distributions. Whereas the rest terms exhibit similarities to \eqref{eq_InstRate}, reflecting the intuitive relationship: since instantaneous DCC adapts to real-time SINR values according to \eqref{eq_InstRate}, its average counterpart should share analogous characteristics.

\subsubsection{I/D and I/A}
In these cases, only the signal power is known instantaneously. Therefore, signal and interference require separate treatment as described in Section \ref{rateCal}. 

First, rewriting \eqref{eq_expRateMean} for these cases
\begin{equation}
      \begin{aligned}
            &\varepsilon_\mathrm{req} = \\
            &\int_{0}^{\infty} f_\gamma\left(\mathcal{S};x\right) Q\left(\sqrt{\frac{n}{V(x)}}\left[\log_2(1+x) - R\left(\mathcal{S};x\right)\right]\right) \mathrm{d}x.
      \end{aligned}
      \label{eq_expRateMeanID}
\end{equation}

Due to the complexity of $R\left(\mathcal{S};x\right)$, we approximate it using its asymptote. At any given moment, the signal power is known and can be treated as a constant value, while the distribution of per-link interference can be determined (Level D) in this case. Given that the statistical properties of both SINR components are not subject to instantaneous fluctuations, this scenario inherently implies the use of a fixed-rate allocation strategy.
\begin{lemma}\label{le_asyCD}
      The asymptote of average DCC when the signal power is a known constant, and the interference information obtained reaches Level D, is:
      \begin{equation}
            \bar{R} = \frac{\log_2 10}{10}\bar{\gamma}\mathrm{(dB)} + \mathcal{C}_4.
            \label{eq_asyCD}
      \end{equation}
      Where, if $m_I \ne 1$, $\mathcal{C}_4$ is defined in \eqref{eq_C3}. Otherwise, it is defined in \eqref{eq_C3-1}.
\end{lemma}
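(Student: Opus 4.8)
The plan is to solve the fixed-rate constraint \eqref{eq_expRateMeanID} for $R$ in the high-SINR regime, exploiting that with the signal power held at a constant and the interference at Level D the SINR $\gamma=\mathcal{S}/\mathcal{I}_E$ is inverse-gamma distributed (Table \ref{tbl_SinrDist}), so its CDF $F_\gamma$ is expressible through the upper incomplete gamma function with a scale proportional to $\bar{\gamma}$. Because Level D fixes $\xi=1+m_I^{-1}$, the dispersion \eqref{eq_NGNdispersion} tends to the constant $V_\infty=(\log_2 e)^2/(2m_I)$ as $\gamma\to\infty$; I would adopt this limiting value and replace $\log_2(1+\gamma)$ by $\log_2\gamma$ throughout, which is the asymptotic simplification that makes the constraint tractable (and, since the signal is constant, $R$ is a single number rather than a function of $x$).

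First I would linearize the Gaussian $Q$-function in \eqref{eq_expRateMeanID} about the decoding threshold $\alpha=2^{R}-1$, replacing it by the standard piecewise-linear surrogate: value $1$ for $\gamma<\nu$, value $0$ for $\gamma>\mu$, and a straight segment of slope $-\beta$ across the window $[\nu,\mu]=[\alpha-\tfrac{1}{2\beta},\alpha+\tfrac{1}{2\beta}]$, with $\beta\propto\sqrt{n/V_\infty}\,(1+\alpha)^{-1}$ read off from the derivative of the $Q$-function argument at $\gamma=\alpha$. Integrating this surrogate against $f_\gamma$ by parts (the boundary terms vanish since $F_\gamma(0)=0$ and the surrogate is $0$ at infinity) collapses the expected BLER to $\beta\int_\nu^\mu F_\gamma(x)\,\mathrm{d}x$, eliminating the $Q$-function entirely.

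Next I would substitute the large-argument (equivalently, high-SINR) expansion of the inverse-gamma CDF, $F_\gamma(x)\approx 1-c\,x^{-m_I}$ with $c\propto(m_I\bar{\gamma})^{m_I}$, which holds because the window sits in the upper tail of $\gamma$. The window integral then reduces to $\beta\big[(\mu-\nu)-c\int_\nu^\mu x^{-m_I}\mathrm{d}x\big]$, and setting this equal to $\varepsilon_\mathrm{req}$ produces an equation in which $\bar{\gamma}$ enters only through the combination $\alpha/\bar{\gamma}$. Solving for $R=\log_2\alpha$ therefore forces $R=\log_2\bar{\gamma}+\text{const}$, i.e. the claimed form \eqref{eq_asyCD} with $\mathcal{C}_4$ equal to that constant; collecting the surviving factors ($\beta$, $c$, the window endpoints, $\Gamma(m_I+1)$ and $\varepsilon_\mathrm{req}$) and writing $\alpha/\bar\gamma$ in closed form yields \eqref{eq_C3}.

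The main obstacle — and the reason two formulae are required — is the primitive $\int x^{-m_I}\mathrm{d}x=\dfrac{x^{1-m_I}}{1-m_I}$, whose $1/(1-m_I)$ factor is singular at $m_I=1$. At $m_I=1$ the correct antiderivative is $\ln x$, so that case must be handled by a separate limiting argument (or by redoing the window integral with the logarithmic primitive), giving \eqref{eq_C3-1}. I would also verify that the large-tail expansion of $F_\gamma$ stays accurate across the whole linearization window, since that window's width scales with $\alpha$ rather than shrinking; this consistency check, rather than any single calculation, is where the derivation demands the most care.
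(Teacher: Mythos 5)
Your setup for the slope is essentially the paper's: linearize the $Q$-function \`a la Makki \emph{et al.}, collapse the expected BLER to $\omega\sqrt{n}\int_{\gamma_1}^{\gamma_2}F_\gamma(x)\,\mathrm{d}x$, use $V(2^R-1)\to(\log_2 e)^2/(2m_I)$, normalize by $\bar{\gamma}$, and deduce that $a=2^R/\bar{\gamma}$ is pinned to a constant by $\varepsilon_\mathrm{req}$, so the slope is $\tfrac{\log_2 10}{10}$ and the intercept is $\log_2 a$. That part is sound and matches Appendix~\ref{proof_le_asyCD}. The intercept computation, however, contains a fatal error: you expand the inverse-gamma CDF as $F_\gamma(x)\approx 1-c\,x^{-m_I}$ on the grounds that "the window sits in the upper tail of $\gamma$." It does not. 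The linearization window is centered at the decoding threshold $2^R-1=a\bar{\gamma}$ with $a<1$, and since the integral of $F_\gamma$ over that window must produce $\varepsilon_\mathrm{req}\ll 1$, the window necessarily lies in the \emph{lower} tail of $\gamma$ (equivalently, the upper tail of the interference), where $F_\gamma(x)=\Gamma\!\left(m_I,\,m_I\bar{\gamma}/x\right)/\Gamma(m_I)$ behaves like $(m_I/z)^{m_I-1}e^{-m_I/z}$ in $z=x/\bar{\gamma}$, not like $1-c\,x^{-m_I}$. With your expansion the window integral evaluates to approximately $\beta(\mu-\nu)\approx 1$ minus a small correction, which can never be equated to $\varepsilon_\mathrm{req}=10^{-5}$; the resulting equation for $a$ is simply the wrong one. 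A telltale sign is that your route yields an algebraic (power-law) equation in $a$ and therefore cannot produce the Lambert $W$ function that appears explicitly in \eqref{eq_C3}.

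The paper instead keeps the upper incomplete gamma function, changes variables so the integrand becomes $x^{m_I-3}e^{-x}$ over $[m_I/(aG_+),\,m_I/(aG_-)]$, removes the integral with the mean-value theorem evaluated at $\varphi^{*}=m_I/(aG_+)$ (chosen to give a reliability-safe bound), and then solves the transcendental equation $\varepsilon_\mathrm{req}\propto a^{-2}\varphi^{m_I-3}e^{-\varphi}$ for $a$. This is also where the true reason for the case split lives: for $m_I\ne 1$ the equation has the form $\varphi^{m_I-1}e^{-\varphi}=\mathrm{const}$ and is inverted via $W(\cdot)$, giving \eqref{eq_C3}, whereas at $m_I=1$ it degenerates to a pure exponential solved by a logarithm, giving \eqref{eq_C3-1}. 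Your attribution of the split to the singular antiderivative $x^{1-m_I}/(1-m_I)$ is a coincidence of the same parameter value attached to the wrong mechanism. To repair the proof you would need to redo the intercept with the correct lower-tail form of $F_\gamma$ and follow something like the paper's mean-value/Lambert-$W$ argument.
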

\begin{proof}
      See Appendix \ref{proof_le_asyCD}.
\end{proof}

In Appendix \ref{proof_th_asyID_k}, we prove that the slope of the asymptote for the SINR(dB)-to-average-DCC curve is independent of the SINR distribution when using fixed rate allocation. This leads to the following corollary:

\begin{corollary}\label{cor_slopeInvar}
      With fixed rate allocation, the slope of the asymptote of the SINR(dB)-to-average-DCC curve is constant at $\frac{\log_2 10}{10}$, independent of the distribution of the SINR.
\end{corollary}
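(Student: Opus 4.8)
The plan is to show that, under fixed rate allocation, the average DCC $\bar{R}$ grows affinely in $\log_2\bar{\gamma}$ with unit slope (one bpcu per octave of $\bar{\gamma}$), so that re-expressing the abscissa in decibels forces the slope to the universal constant $\frac{\log_2 10}{10}$, with only the intercept retaining any dependence on the SINR law. Both Theorem~\ref{th_asyII} and Lemma~\ref{le_asyCD} already display exactly this affine form, $\frac{\log_2 10}{10}\bar{\gamma}\mathrm{(dB)}$ plus a distribution-dependent constant; the corollary asserts that this is not an accident of those two cases but a structural property of any fixed-rate solution of \eqref{eq_expRateMean}.

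First I would recall that under fixed rate allocation $R$ is defined implicitly by \eqref{eq_expRateMean}, i.e. as the root of $\varepsilon_\mathrm{req}=\int_0^\infty f_\gamma(x)\,Q(\sqrt{n/V(x)}[\log_2(1+x)-R])\,dx$, and that for fixed rate $R$ coincides with the average DCC. Next I would pass to the high-SINR asymptote: from \eqref{eq_NGNdispersion} the dispersion $V(x)$ tends to the $x$-independent limit $(\log_2 e)^2(\xi-1)/2$, while $\log_2(1+x)\to\log_2 x$. These two facts collapse the integrand into a function of $\log_2 x-R$ alone, scaled by the fixed factor $\sqrt{n/V_\infty}$.

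The key step is the scale-family structure of the projected SINR. For every fixed-rate entry of Table~\ref{tbl_SinrDist} (inverse-gamma, F, and their $K$-component mixtures), a uniform scaling of the received power rescales the SINR density as $f_\gamma(x)=\bar{\gamma}^{-1}\phi(x/\bar{\gamma})$ for a normalized, $\bar{\gamma}$-independent shape $\phi$. Substituting $t=x/\bar{\gamma}$ and using $\log_2 x=\log_2\bar{\gamma}+\log_2 t$ turns the defining equation into $\varepsilon_\mathrm{req}=\int_0^\infty\phi(t)\,Q(\sqrt{n/V_\infty}[\log_2 t-(R-\log_2\bar{\gamma})])\,dt$, in which $\bar{\gamma}$ enters only through the combination $R-\log_2\bar{\gamma}$. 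Hence this combination equals a constant $R_0$ fixed by $\phi$, $n$, $\varepsilon_\mathrm{req}$ and $V_\infty$ but not by $\bar{\gamma}$, giving $R=\log_2\bar{\gamma}+R_0$. Writing $\log_2\bar{\gamma}=\frac{\log_2 10}{10}\,\bar{\gamma}\mathrm{(dB)}$ then yields $\frac{d\bar{R}}{d\bar{\gamma}\mathrm{(dB)}}=\frac{\log_2 10}{10}$ independently of $\phi$, which is the claim.

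I expect the main obstacle to lie in two places. First, in the mixture (Level~M) cases the unconditional mean is a convex combination of component means, so I must check that a common scaling of all component spreads rescales $\bar{\gamma}$ by a single factor while leaving the normalized shape $\phi$ intact; here I would lean on the component-wise treatment in Appendix~\ref{proof_th_asyID_k}. Second, in the I/D and I/A cases the average DCC carries an outer expectation over the instantaneous signal $\mathcal{S}$; since Lemma~\ref{le_asyCD} gives the conditional slope $\frac{\log_2 10}{10}$ for every value of $\mathcal{S}$, this common slope survives the outer averaging and only the intercept is perturbed, but making that exchange precise—together with justifying the constant-dispersion and $\log_2(1+x)\approx\log_2 x$ replacements in the low-$x$ tail—is the delicate part that determines whether the residual $R_0$ is genuinely the sole slope-irrelevant term.
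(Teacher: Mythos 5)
Your proposal is correct and follows essentially the same route as the paper's Appendix~\ref{proof_th_asyID_k}: both arguments reduce to showing that, once the dispersion saturates at its high-SINR limit and the SINR law is a scale family in $\bar{\gamma}$, the defining BLER equation constrains only the normalized quantity $2^{R}/\bar{\gamma}$ (your $R-\log_2\bar{\gamma}$, the paper's constant $a$), forcing $R=\log_2\bar{\gamma}+\mathrm{const}$ and hence the universal slope $\frac{\log_2 10}{10}$ in decibels. The only difference is cosmetic: you argue scale invariance directly on the $Q$-function integral, whereas the paper first applies the linearization of \cite{Makki2014} to rewrite it as an integral of the CDF (a step it actually needs only for the intercepts, not for the slope), and you make explicit the scale-family verification for the mixture cases that the paper leaves implicit.
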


We can now derive the asymptote for the I/D case:
\begin{theorem}\label{th_asyID}
      The asymptote of the average DCC in the I/D case is
      \begin{equation}
            \bar{R}_{I/D} = \frac{\log_2 10}{10}\bar{\gamma}\mathrm{(dB)} + \psi(m_0)\log_2e - \log_2 m_0 + \mathcal{C}_4.
            \label{eq_asyID}
      \end{equation}
\end{theorem}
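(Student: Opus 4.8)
The plan is to build directly on Lemma \ref{le_asyCD} and Corollary \ref{cor_slopeInvar}, reducing the I/D case to a single outer expectation over the signal power. Because the signal power is instantaneously known (Level I) while only the interference distribution is available (Level D), I start from the separated-expectation decomposition of Section \ref{rateCal}, writing $\bar{R}_{I/D} = \mathbb{E}_{\mathcal{S}}\{\mathbb{E}_{\mathcal{I}}[R(\mathcal{S})]\}$. The inner expectation is precisely the scenario of Lemma \ref{le_asyCD}: with the signal power frozen at the constant value $\mathcal{S}$, the interference is Level D and a fixed rate is allocated per signal realization. Hence the inner step supplies the asymptote $R(\mathcal{S}) \approx \frac{\log_2 10}{10}\,\bar{\gamma}_{\mathcal{S}}(\mathrm{dB}) + \mathcal{C}_4$, where $\bar{\gamma}_{\mathcal{S}} = \mathcal{S}\,\mathbb{E}[1/\mathcal{I}_E]$ is the conditional average SINR given the signal power, and $\mathcal{C}_4$ (from \eqref{eq_C3} or \eqref{eq_C3-1}) aggregates only interference-side quantities.

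The key simplification I would exploit is that, by Corollary \ref{cor_slopeInvar}, the slope of this asymptote is exactly $\frac{\log_2 10}{10}$, so that $\frac{\log_2 10}{10}\bar{\gamma}_{\mathcal{S}}(\mathrm{dB}) = \log_2 \bar{\gamma}_{\mathcal{S}} = \log_2 \mathcal{S} + \log_2 \mathbb{E}[1/\mathcal{I}_E]$. This cleanly separates the signal-dependent term $\log_2 \mathcal{S}$ from the interference-dependent constant. Since neither $\mathcal{C}_4$ nor $\log_2 \mathbb{E}[1/\mathcal{I}_E]$ involves $\mathcal{S}$, the outer expectation then acts only on $\log_2 \mathcal{S}$.

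Next I take the outer expectation over $\mathcal{S} \sim \Gamma(m_0, \Omega_0/m_0)$. Using the standard log-moment identity for a gamma random variable, $\mathbb{E}[\ln \mathcal{S}] = \psi(m_0) + \ln(\Omega_0/m_0)$, so $\mathbb{E}[\log_2 \mathcal{S}] = \psi(m_0)\log_2 e + \log_2 \Omega_0 - \log_2 m_0$. Collecting terms yields $\bar{R}_{I/D} = \psi(m_0)\log_2 e - \log_2 m_0 + \log_2 \Omega_0 + \log_2 \mathbb{E}[1/\mathcal{I}_E] + \mathcal{C}_4$. Finally, recognizing that the unconditional average SINR satisfies $\bar{\gamma} = \mathbb{E}[\mathcal{S}]\,\mathbb{E}[1/\mathcal{I}_E] = \Omega_0\,\mathbb{E}[1/\mathcal{I}_E]$, I recombine $\log_2 \Omega_0 + \log_2 \mathbb{E}[1/\mathcal{I}_E] = \log_2 \bar{\gamma} = \frac{\log_2 10}{10}\bar{\gamma}(\mathrm{dB})$, which produces \eqref{eq_asyID}.

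The main obstacle I anticipate is the careful justification that the inner asymptote can be reused as a function of the conditional average SINR with an intercept $\mathcal{C}_4$ that is genuinely independent of the signal realization $\mathcal{S}$, since it is this independence that lets the averaging touch only the $\log_2 \mathcal{S}$ term. Establishing that the slope is exactly $\frac{\log_2 10}{10}$ via Corollary \ref{cor_slopeInvar} and that $\mathcal{C}_4$ collects only interference-side contributions is the crux; once that separation is secured, the remaining work is simply the gamma log-moment evaluation.
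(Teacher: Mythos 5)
Your proposal is correct and follows essentially the same route as the paper: apply Lemma \ref{le_asyCD} conditionally on the known signal power to get $R = \log_2\mathcal{S} - \log_2\bar{\mathcal{I}}_E + \mathcal{C}_4$, take the outer expectation over $\mathcal{S}\sim\Gamma(m_0,\Omega_0/m_0)$ using $\mathbb{E}[\log\mathcal{S}]=\psi(m_0)+\log(\Omega_0/m_0)$, and recombine into $\log_2\bar{\gamma}$. The only cosmetic difference is that you write the conditional average SINR as $\mathcal{S}\,\mathbb{E}[1/\mathcal{I}_E]$ whereas the paper uses $\mathcal{S}/\bar{\mathcal{I}}_E$ (a ratio-of-means convention), but this does not change the argument or the result.
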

\begin{proof}
      Rewriting \eqref{eq_asyCD} from Lemma \ref{le_asyCD}:
      \begin{equation}
            R = \log_2\frac{\mathcal{S}}{\bar{\mathcal{I}_E}} + \mathcal{C}_4,
            \label{eq_IDs}
      \end{equation}
      where $\bar{\mathcal{I}}_E$ is the average EIP.
      Taking the expectation of $R$ with respect to $\mathcal{S}$ yields
      \begin{equation}
            \begin{aligned}
                  \mathbb{E}_\mathcal{S}\left[R\right]
                  &\approx \mathbb{E}_\mathcal{S}\left[\log_2\mathcal{S}\right] - \log_2\bar{\mathcal{I}_E} + \mathcal{C}_4 \\
                  &= \log_2e\left[\psi(m_0)+\log \bar{\mathcal{S}}-\log m_0\right] - \log_2\bar{\mathcal{I}_E} + \mathcal{C}_4 \\
                  &= \log_2 \bar{\gamma} +\psi(m_0)\log_2e - \log_2 m_0 + \mathcal{C}_4,
            \end{aligned}
      \end{equation}
      where $\psi(x)$ is the digamma function\cite[5.2.2]{DLMF}, and $\bar{\mathcal{S}}$ is the average signal power.
      Adopting the decibel notation gives us \eqref{eq_asyID}.
\end{proof}
Similarly, setting $m_I = 0.5$ in Theorem \ref{th_asyID} allows for deriving the asymptote in the I/A case.

\subsubsection{D/D and D/A}\label{asyDD}
These two scenarios are commonly analyzed in grant-free transmission frameworks, where instantaneous feedback mechanisms are inherently unfeasible. Under such constraints, fixed-rate allocation emerges as the sole technically viable solution. Consequently, all subsequent derivations are based on \eqref{eq_expRateMean}.
\begin{theorem}\label{th_asyDD}
      The asymptote of the average DCC in the D/D case is:
      \begin{equation}
            \bar{R}_{D/D} = \frac{\log_2 10}{10}\bar{\gamma}\mathrm{(dB)} + \mathcal{C}_5,
            \label{eq_asyDD}
      \end{equation}
      where $\mathcal{C}_5$ is derived in \eqref{eq_C5}.
\end{theorem}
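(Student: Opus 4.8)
The plan is to exploit that the D/D scenario admits no instantaneous adaptation, so the achievable rate is fixed and governed by the average-BLER constraint \eqref{eq_expRateMean} rather than the adaptive form \eqref{eq_expRateInst}. Corollary \ref{cor_slopeInvar} already fixes the slope of the SINR(dB)-to-DCC asymptote at $\tfrac{\log_2 10}{10}$ for \emph{any} SINR law, and since $\tfrac{\log_2 10}{10}\bar{\gamma}\mathrm{(dB)}=\log_2\bar{\gamma}$ with $\bar{\gamma}=\Omega_0/\Omega_I$ (the ratio of the average signal and interference powers, as used in Theorem \ref{th_asyID}), only the additive constant $\mathcal{C}_5$ remains to be determined. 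From Table \ref{tbl_SinrDist}, in the D/D case $\gamma=\mathcal{S}/\mathcal{I}_E$ is the ratio of the independent gamma variables $\mathcal{S}\sim\Gamma(m_0,\Omega_0/m_0)$ and $\mathcal{I}_E\sim\Gamma(m_I,\Omega_I/m_I)$, i.e.\ a scaled F law. I would carry the analysis in the high-SINR regime, where $\log_2(1+\gamma)\to\log_2\gamma$ and the dispersion \eqref{eq_NGNdispersion} tends to the constant $V_\infty:=(\log_2 e)^2(\xi-1)/2$ (with $\xi=1+m_I^{-1}$), so that \eqref{eq_expRateMean} collapses to the single scalar equation $\varepsilon_\mathrm{req}=\mathbb{E}_\gamma[\,Q(\sqrt{n/V_\infty}(\log_2\gamma-R))\,]$ with $R=\log_2\bar{\gamma}+\mathcal{C}_5$.

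To make this expectation tractable I would write $\log_2\gamma=\log_2\mathcal{S}-\log_2\mathcal{I}_E$ and match the first two moments of each log-gamma term through the digamma and trigamma functions, e.g.\ $\mathbb{E}[\log_2\mathcal{S}]=\log_2 e\,[\psi(m_0)+\ln(\Omega_0/m_0)]$ and $\mathrm{Var}[\log_2\mathcal{S}]=(\log_2 e)^2\psi'(m_0)$, with the analogous expressions for $\mathcal{I}_E$. Treating $\log_2\gamma$ as approximately Gaussian, the Gaussian average of a $Q$-function with argument affine in a Gaussian is itself a $Q$-function whose effective variance is the sum of the two variances, which is precisely what renders the otherwise intractable log-F integral solvable. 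Equivalently, and more transparently for interpretation, I would reuse Lemma \ref{le_asyCD}: conditioning on $\mathcal{S}$ reduces the inner average over $\mathcal{I}_E$ to its constant-signal, Level-D problem, so inverting the monotone rate-to-BLER relation of that lemma gives the conditional BLER $g(\mathcal{S},R)\approx Q\big((\mu_I+\log_2\mathcal{S}-\log_2\bar{\mathcal{I}}_E-R)/\sigma_I\big)$, where $\mu_I$ and $\sigma_I$ are the mean offset and effective standard deviation carried by $\mathcal{C}_4$; averaging $g$ over the signal then reproduces the same scalar equation.

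Solving that equation for $R$ and collecting terms yields $\bar{R}_{D/D}=\tfrac{\log_2 10}{10}\bar{\gamma}\mathrm{(dB)}+\mathcal{C}_5$ with
\begin{equation}
\begin{aligned}
\mathcal{C}_5 = {}& \log_2 e\,\big[\psi(m_0)-\ln m_0+\ln m_I-\psi(m_I)\big]\\
&-Q^{-1}(\varepsilon_\mathrm{req})\sqrt{(\log_2 e)^2\big[\psi'(m_0)+\psi'(m_I)\big]+V_\infty/n}.
\end{aligned}
\label{eq_C5}
\end{equation}
Two sanity checks support this form: the mean-offset term is identical to the offset already appearing in Theorems \ref{th_asyII} and \ref{th_asyID}, whereas the dispersion term differs from the I/D case of Theorem \ref{th_asyID} (and from the I/I case) exactly by the extra $\psi'(m_0)$ inside the square root. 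That extra term is the penalty for being unable to track the now-random signal power: under fixed-rate allocation its fluctuation inflates the effective SINR uncertainty, lowering $\mathcal{C}_5$.

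The main obstacle is the Gaussian (moment-matching) approximation of the log-gamma---hence log-F---law, which is exact only in mean and variance; I would need to argue that these first two moments dominate the $Q$-weighted average for the relevant ranges of $m_0,m_I$ and small $\varepsilon_\mathrm{req}$, so that skewness is negligible. A secondary difficulty is justifying the high-SINR reduction $V(\gamma)\to V_\infty$ and $\log_2(1+\gamma)\to\log_2\gamma$, whose error must vanish as $\bar{\gamma}\to\infty$ despite the heavy lower tail of the F law where these replacements are loose. Finally, because $\mathcal{C}_5$ is built on $\mathcal{C}_4$, it inherits the degeneracy flagged in Lemma \ref{le_asyCD}: the $m_I=1$ case, where the relevant moment of the inverse-gamma/F law is singular, must be treated as a separate limiting form.
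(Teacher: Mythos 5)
Your opening moves are fine (fixed-rate allocation, Corollary \ref{cor_slopeInvar} for the slope, reduction to a single scalar equation for the intercept), but the core of your derivation of $\mathcal{C}_5$ --- the Gaussian moment-matching of $\log_2\gamma$ --- is a genuine gap, and it produces a qualitatively wrong answer. The intercept of the fixed-rate asymptote is not a ``bulk'' quantity controlled by the mean and variance of $\log_2\gamma$; it is a left-tail (outage) quantity. For small $\varepsilon_\mathrm{req}$, the average-BLER constraint is dominated by $\Pr[\gamma\lesssim 2^{R}]$, and the F-law CDF behaves like $F_{\hat\gamma}(x)\propto x^{m_0}$ near the origin (equivalently, the density of $\log\gamma$ has an \emph{exponential}, not Gaussian, left tail $\propto e^{m_0 t}$). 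Solving $\varepsilon_\mathrm{req}\propto a^{m_0}$ gives $a\propto\varepsilon_\mathrm{req}^{1/m_0}$ and hence $\mathcal{C}_5 = m_0^{-1}\log_2\varepsilon_\mathrm{req}+(\text{const})$, which is exactly what the paper obtains in \eqref{eq_C5} (via the Makki-type linearization \eqref{eqa_IDkepsA} followed by a small-argument expansion of the F CDF) and what Corollary \ref{cor_epsForFixed} asserts. Your formula instead predicts $\mathcal{C}_5\ni -Q^{-1}(\varepsilon_\mathrm{req})\sqrt{(\log_2 e)^2[\psi'(m_0)+\psi'(m_I)]+V_\infty/n}$, i.e.\ a $\sqrt{\log(1/\varepsilon_\mathrm{req})}$ dependence rather than a linear-in-$\log\varepsilon_\mathrm{req}$ one with diversity slope $1/m_0$. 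The two disagree by an unbounded amount as $\varepsilon_\mathrm{req}\to 0$, and the discrepancy is already material at the URLLC targets $10^{-5}$--$10^{-13}$ considered in the paper. Your own caveat about skewness cannot rescue this: the issue is not a small skewness correction but that the Gaussian surrogate has the wrong tail order entirely.

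A secondary problem is your alternative route through Lemma \ref{le_asyCD}: the conditional (constant-$\mathcal{S}$, Level-D interference) rate--BLER relation derived there is a Lambert-$W$/logarithmic relation in $\varepsilon$, not of the form $Q\big((\mu_I+\log_2\mathcal{S}-\log_2\bar{\mathcal{I}}_E-R)/\sigma_I\big)$, so the claimed equivalence with the moment-matched computation does not hold either. What your ``sanity check'' actually reveals is the error: the digamma mean-offset term $\log_2 e\,[\psi(m_0)-\ln m_0+\ln m_I-\psi(m_I)]$ legitimately appears in the I/I and I/D intercepts because those cases adapt the rate to the (partly) known instantaneous power, so the \emph{mean} of $\log\gamma$ is the right object; under fixed-rate allocation it is replaced by tail constants such as $\log_2(m_I/m_0)$ and the $\mathrm{B}(m_0,m_I)$-dependent term in \eqref{eq_C5}. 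To repair the proof you should follow the tail route: linearize the $Q$-function as in Appendix \ref{proof_le_asyCD} to reduce \eqref{eq_expRateMean} to an integral of $F_{\hat\gamma}$ over a shrinking window around $a=2^R/\bar\gamma$, expand the F CDF for small argument, and solve for $a$; the $m_I=1$ special case you worry about does not arise here (that degeneracy belongs to Lemma \ref{le_asyCD}, not to Theorem \ref{th_asyDD}).
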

\begin{proof}
      See Appendix \ref{proof_th_asyDD}.
\end{proof}

From Theorem \ref{th_asyDD}, we can establish a relationship between $\varepsilon_\mathrm{req}$ and the intercept of the asymptote
\begin{corollary}\label{cor_epsForFixed}
      If fixed rate allocation is adopted, the intercept of the asymptote is linear to the logarithm of the BLER requirement
      \begin{equation}
            \mathcal{C}_5 = {m_0}^{-1} \log_2\varepsilon_\mathrm{req} + (\text{terms invariable to }\varepsilon_\mathrm{req}).
            \label{eq_C5vsEPS}
      \end{equation}
\end{corollary}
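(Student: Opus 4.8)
The plan is to derive the $\varepsilon_\mathrm{req}$-dependence of the intercept $\mathcal{C}_5$ by reducing the fixed-rate constraint to an outage condition in the high-SINR asymptote, where the intercept is defined. Under fixed rate allocation the rate $R$ is pinned by \eqref{eq_expRateMean}, i.e.\ by forcing the average BLER to equal $\varepsilon_\mathrm{req}$. Along the asymptote that isolates $\mathcal{C}_5$, the finite-blocklength dispersion term \eqref{eq_NGNdispersion} becomes subdominant relative to $\log_2(1+\gamma)$, so the $Q$-function in \eqref{eq_expRateMean} sharpens into an indicator and the average BLER collapses to an outage probability,
\begin{equation}
      \varepsilon_\mathrm{req} \approx \Pr\!\left[\log_2(1+\gamma) < R\right] = F_\gamma\!\left(2^R - 1\right),
\end{equation}
with $F_\gamma$ the SINR CDF. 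This is the central reduction I would make precise.

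Next I would pass to the regularized variable $\hat\gamma = \gamma\,\Omega_I/\Omega_0$, which is F-distributed with parameters $2m_0$ and $2m_I$ per Table~\ref{tbl_SinrDist}. The outage event becomes $\hat\gamma < \hat t$ with threshold $\hat t = (2^R-1)\,\Omega_I/\Omega_0$. By Corollary~\ref{cor_slopeInvar} and Theorem~\ref{th_asyDD}, $R \approx \log_2\bar\gamma + \mathcal{C}_5$ asymptotically, and since $\bar\gamma = \bar{\mathcal{S}}/\bar{\mathcal{I}}_E = \Omega_0/\Omega_I$, the threshold reduces to $\hat t \approx 2^{\mathcal{C}_5}$, a small $\varepsilon_\mathrm{req}$-dependent constant. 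The key analytic input is the near-origin behavior of the F-distribution: because the numerator gamma carries shape $m_0$, the density scales like $x^{m_0-1}$ and hence $F_{\hat\gamma}(t) \sim c\, t^{m_0}$ for small $t$, with $c$ collecting distribution constants independent of $\varepsilon_\mathrm{req}$.

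Substituting this expansion into the outage relation yields $\varepsilon_\mathrm{req} \approx c\,\hat t^{\,m_0} = c\,2^{m_0 \mathcal{C}_5}$. Taking $\log_2$ and solving for the intercept gives
\begin{equation}
      \mathcal{C}_5 = m_0^{-1}\log_2\varepsilon_\mathrm{req} + (\text{terms invariable to }\varepsilon_\mathrm{req}),
\end{equation}
which is exactly the claim, with the invariant remainder equal to $-m_0^{-1}\log_2 c$. As an independent check I would read the $\log_2\varepsilon_\mathrm{req}$ term directly off the closed form \eqref{eq_C5} established in Theorem~\ref{th_asyDD}; the outage computation predicts both the coefficient $m_0^{-1}$ and the fact that the interference shape $m_I$, the spread ratio, and the blocklength $n$ enter only the constant.

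The step I expect to be the main obstacle is justifying the outage approximation cleanly enough to guarantee that the dispersion and blocklength contributions are genuinely absorbed into the $\varepsilon_\mathrm{req}$-invariant remainder rather than perturbing the coefficient of $\log_2\varepsilon_\mathrm{req}$. Equivalently, if one works directly from \eqref{eq_C5}, the obstacle is pure bookkeeping: verifying that $\varepsilon_\mathrm{req}$ enters $\mathcal{C}_5$ through a single logarithmic term whose prefactor is precisely the reciprocal of the signal shape parameter $m_0$, with no residual $m_I$ dependence in the slope.
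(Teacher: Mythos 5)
Your proposal is correct, but it takes a different route from the paper. The paper's entire proof is a one-line rearrangement of the closed form \eqref{eq_C5} already established in Theorem~\ref{th_asyDD}: there, $\varepsilon_\mathrm{req}$ appears exactly once, inside a $\log_2$ prefixed by $m_0^{-1}$, and everything else ($\mathrm{B}(m_0,m_I)$, $\mathbb{S}$, $G_\pm$, $n$) is $\varepsilon_\mathrm{req}$-invariant, so \eqref{eq_C5vsEPS} is immediate bookkeeping. You instead re-derive the scaling from first principles via an outage reduction, $\varepsilon_\mathrm{req}\approx F_{\hat\gamma}(\hat t)$ with $\hat t\approx 2^{\mathcal{C}_5}$, combined with the left-tail behavior $F_{\hat\gamma}(t)\sim c\,t^{m_0}$ of the F-distribution. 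This essentially retraces the key step of Appendix~\ref{proof_th_asyDD}, where the paper uses the Makki linearization to replace the $Q$-function by a ramp over $[aG_-,aG_+]$ and then expands the CDF near the origin to obtain $\varepsilon_\mathrm{req}\propto a^{m_0}$ in \eqref{eqa_DDbepsfinal}; your sharper indicator-function version differs from the ramp only by the averaging factor $\mathbb{S}/\bigl[(m_0+1)(G_+-G_-)\bigr]$, which is $\varepsilon_\mathrm{req}$-independent because $V(\infty)=(\log_2 e)^2/(2m_I)$ makes $G_\pm$ constants --- this resolves the obstacle you flag at the end. The trade-off: the paper's proof is trivially rigorous given Theorem~\ref{th_asyDD} but opaque, whereas your argument explains \emph{why} the coefficient is $m_0^{-1}$ (it is the left-tail order of the signal's gamma fading) and shows the conclusion holds for any SINR law whose CDF vanishes like $t^{m_0}$ at the origin, at the cost of re-proving machinery the paper already has in hand.
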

\begin{proof}
      Rearranging \eqref{eq_C5} directly yields \eqref{eq_C5vsEPS}.
\end{proof}

The linear relationship presented in Corollary \ref{cor_epsForFixed} suggests that using multiple repetitions might be profitable for mitigating the DCC penalty by relaxing the BLER target for each replicated transmission.

Similarly, the asymptote of the D/A case can be derived from Theorem \ref{th_asyDD} by setting $m_I = 0.5$.

\section{Numerical Results and Analysis}\label{Sim}
This section evaluates the proposed theorems analyzes their impact on average DCC, considering the precision of obtained power information.

\subsection{Simulation Configurations}
The simulations presented in this section serve to demonstrate the practical applicability of our proposed framework and associated theorems. It is important to emphasize that, as the theoretical derivations are fundamentally based on SINR analysis, they maintain inherent environment-independence: a characteristic stemming from the mathematical formulation rather than empirical assumptions. While we fully recognize the heterogeneous nature of manufacturing systems and the vast diversity of operational scenarios encountered in industry, it is both impractical and unnecessary to simulate every possible configuration within this section.

To address this challenge, our approach employs two complementary strategies: Section \ref{simRes} presents comprehensive simulations using a representative parameter set that captures typical system behavior. Meanwhile, the rest of Section \ref{allRes} provides detailed analyses of individual parameter impacts through targeted simulation studies, ensuring both broad applicability and nuanced understanding of the framework's behavior under varying conditions.

Inside the manufacturing plant, there are $\lambda_r = 4$ rows, with $\lambda_c = 3$ columns per row of cells. Each cell is a squared space with a side length of $r = 50$ m and adjacent to each other. One BS and multiple UEs are randomly placed within each cell, and UEs connect to the BS inside the same cell. Fig.~\ref{fig_gen_sim} illustrates one such scenario. Each UE takes the semi-persistence criterion ($-67$ dBm) as the baseline power $P$. Denoting the distance between $U_i$ and $C_{z_j}$ as $d_{ij}$, the pathloss model used in simulation is
\begin{equation}
      L_{ij} = 38.46 + 20 \log_{10}{d_{ij}}.
\end{equation}
Further, partial path loss compensation is applied with a factor $\delta \in (0,1)$, resulting in $P_i = P\times L_{ii}^{\delta}$. Besides, an upper bound of transmission power $P_{max} = 23$ dBm is applied to all the UEs.

\begin{figure}[!t]
      \centering
      \includegraphics[width = 0.35\textwidth]{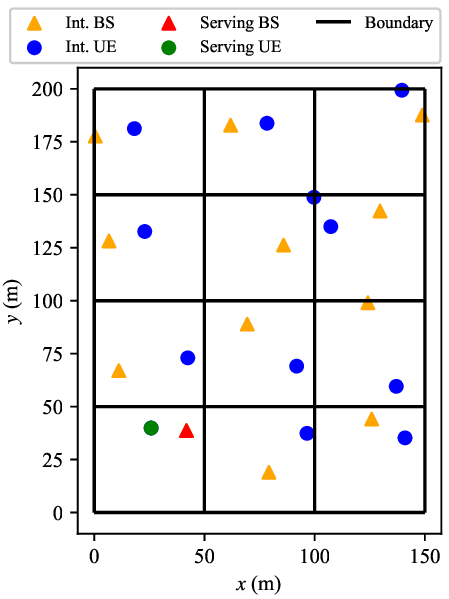}
      \caption{One of the generated scenarios.}
\label{fig_gen_sim}
\end{figure}

Unless otherwise specified, results presented in this section are based on the parameters listed in Table \ref{tbl_SimParams}.

\begin{table}[!t]
      \caption{Simulation Parameters}
      \begin{center}
            \label{tbl_SimParams}
            \begin{tabular}{|c|c|}
            \hline
            \textbf{Parameter}&\textbf{Value} \\
            \hline
            Side length $r$& 50 m \\
            Number of rows $\lambda_r$& 4 \\
            Number of columns $\lambda_c$& 3 \\
            Path loss compensation factor $\delta$& 0.7 \\
            Path loss model& $(38.46 + 20\log_{10}{d})$ dB\\
            White noise power density& -174 dBm/Hz \\
            Channel uses $n$& 200 \\
            Baseline transmission power $P$& -67 dBm \\
            Maximum power $P_{max}$& 23 dBm \\
            Shape parameter $m_0$ and $m_I$& 2\\
            BLER requirement $\varepsilon_\mathrm{req}$& $10^{-5}$\\
            \hline
            \end{tabular}
      \end{center}
\end{table}

\subsection{Evaluation of Theorem \ref{th_avgRate}}
We compare the proposed theorem with the result obtained through numerical integration to assess its accuracy. Fig.~\ref{fig_th1}(a) demonstrates high similarity between the derived result and simulated values, supporting the validity of our proposed theorem.

\begin{figure}[!t]
      \centering
      \begin{minipage}[b]{0.48\textwidth}
      \centering
      \subfigure[]{\includegraphics[width = 0.48\textwidth]{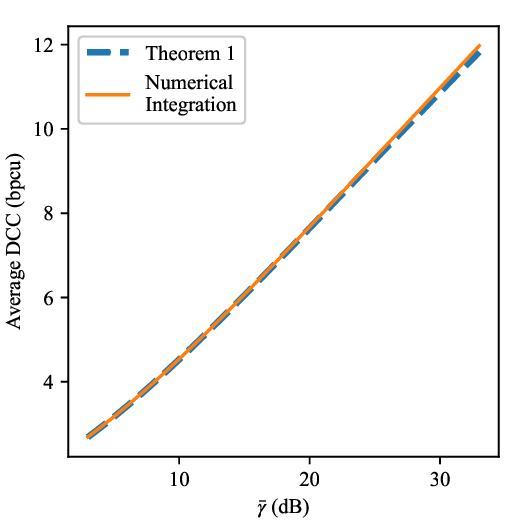}}
      \subfigure[]{\includegraphics[width = 0.48\textwidth]{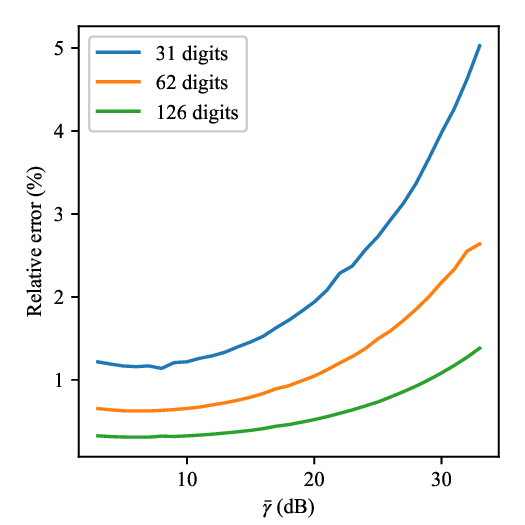}}
      \end{minipage}
      \caption{Evaluation of Theorem \ref{th_avgRate} with $m_I=0.5$.
               (a) Exact value of the average DCC, calculated by numerical integration and Theorem \ref{th_avgRate}.
               (b) The relative error between the value of numerical integration and Theorem \ref{th_avgRate} with 31-, 62-, and 126-digit precision.}
\label{fig_th1}
\end{figure}
Furthermore, the relative error (Fig. \ref{fig_th1}(b)) illustrates that these measures are indeed minimal, further confirming the accuracy of our derived result. It is important to note that at higher SINR values, numerical precision limitations can be a primary source of error. As numerical precision improves, the difference between theoretical and simulated results approaches near-zero levels.

\subsection{Evaluation of the asymptotic analysis and the impact of different factors}\label{allRes}
This subsection analyzes how the accuracy of power information and parameter configurations affect average DCC. We will also explore the implications of varying degrees of precision in power data.

\subsubsection{The impact of power information precision}\label{simRes}
Fig.~\ref{fig_asy_difscenario} demonstrates the rate and BLER performance under different power information precision scenarios.

\begin{figure}[!t]
      \centering
      \begin{minipage}[b]{0.48\textwidth}
      \centering
      \subfigure[]{\includegraphics[width = 0.48\textwidth]{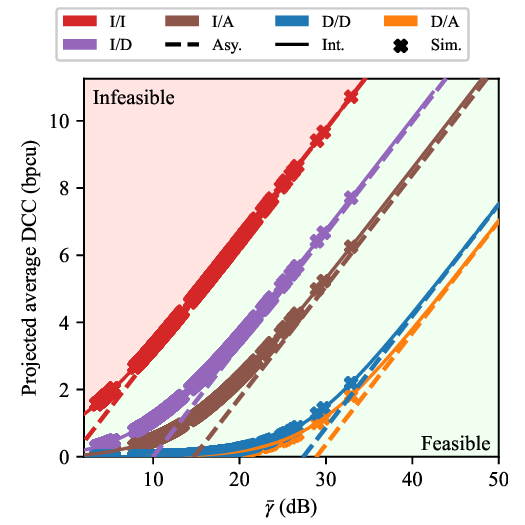}}
      \subfigure[]{\includegraphics[width = 0.48\textwidth]{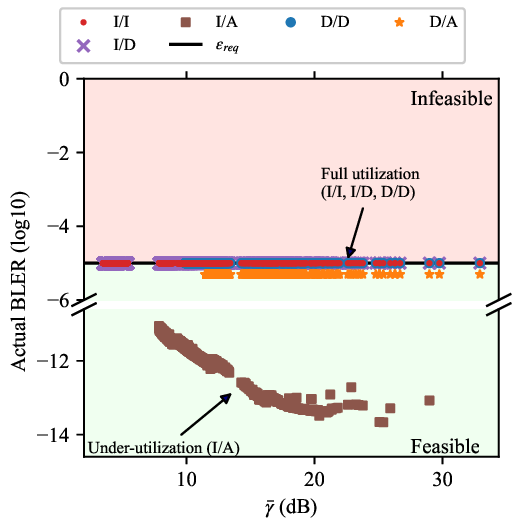}}
      \end{minipage}
      \caption{Performance under cases with per-link power information precision.
               (a) Projected average DCC. (b) Actual BLER.
               Remark: different colors indicate different scenarios. Besides, in (a), different line styles and marker represent different data sources. The dashed lines, solid lines, and "$\times$" markers correspond to asymptotes (Asy.), numerical integration results (Int.), and simulated results (Sim.), respectively. In (b), markers are used in combination with colors to indicate different scenarios, and the solid line indicates the reliability requirement ($\varepsilon_\mathrm{req}$).}
\label{fig_asy_difscenario}
\end{figure}
Fig.~\ref{fig_asy_difscenario}(a) reveals two distinct clusters of curves. The precision of signal power significantly impacts the projected average achievable rate. Interestingly, under the same signal power precision, the average DCC-SINR curves for cases with per-link interference power available (I, D, and A) are closely grouped. This observation highlights the importance of obtaining per-link interference power information. While further increasing the precision of interference power marginally improves the average DCC, level A should be the preferred choice due to its balance between complexity and performance.

Combining Fig.~\ref{fig_asy_difscenario}(b), we infer that reducing signal and interference power information precision from level I to level D negatively affects actual DCC.  For example, the I/I and I/D cases exhibit a BLER just meeting the target (indicating projected DCC is close to actual DCC), but their projected average DCC differs, with I/I achieving higher DCC than I/D. The same trend holds between the I/D and D/D cases.

Conversely, the difference in projected DCC between levels D and A of interference power information is not due to reduced \emph{actual} DCC but from insufficient channel utilization (evident by a significantly lower BLER, especially for I/A). This suggests that while precise knowledge of signal power is beneficial, it might be less crucial than obtaining per-link interference power.

Furthermore, Fig.~\ref{fig_asy_difscenario}(a) shows that the asymptotes derived in Section \ref{avgAsy} closely align with numerical integration curves for the same case, justifying their use for accurate and easy performance evaluation.

\begin{figure}[!t]
      \centering
      \begin{minipage}[b]{0.48\textwidth}
      \centering
      \subfigure[]{\includegraphics[width = 0.48\textwidth]{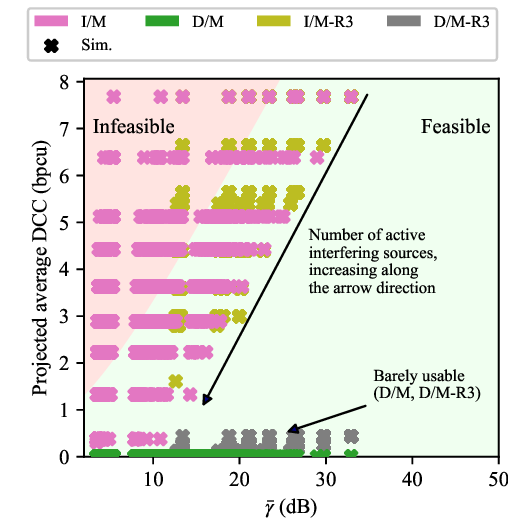}}
      \subfigure[]{\includegraphics[width = 0.48\textwidth]{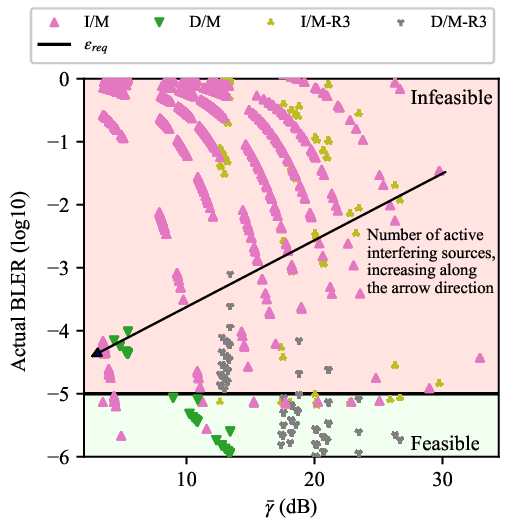}}
      \end{minipage}
      \caption{Performance under cases with aggregated interference information.
               (a) Projected average DCC. (b) Actual BLER.
               Remark: in (a), different colors indicate different scenarios, whereas different marker represent different data sources. The "$\times$" marker in (a) corresponds to simulated results (Sim.), and the solid line in (b) indicates the reliability requirement ($\varepsilon_\mathrm{req}$).}
\label{fig_mcase_difscenario}
\end{figure}
However, when the interference power precision is level M, the projections of average DCC vary significantly with different signal power precision cases, as shown in Fig.~\ref{fig_mcase_difscenario}. For the D/M case, the projected DCC approaches zero, resulting from conservative projections of DCC, which is due to the combination of excessive width of interference PDF \footnote{One can directly infer from \eqref{eq_pdfIM} that, the width of the PDF of the aggregated interference is at least as large as that of its widest component. Moreover, the difference between the largest and smallest spread parameters among the components also contributes to the overall width of the aggregated PDF. As the number of components increases, both of these factors tend to grow, leading to a substantial widening of the aggregated PDF.} and uncertainty of signal link power. Even utilizing the reuse-3 scheme (D/M-R3) can barely improve the projected DCC.

In contrast, for the I/M case, different numbers of interference sources yield different projections. However, for any specified number of interference devices, the projection remains unchanged. As a result, multiple parallel bands can be observed in Fig.~\ref{fig_mcase_difscenario}(a) for I/M case. The same behavior is also present when reuse-3 scheme is adopted (I/M-R3). Compared to the D/M cases, the exact knowledge of signal link power helps in generating viable projections.

Unfortunately, as illustrated in Fig.~\ref{fig_mcase_difscenario}(b), conservative projection of DCC by the D/M case is still unable to fully guarantee the requirement of reliability, especially when SINR is low. As the aggregated interference information is not targeted to any specific combinations of interference devices, utilizing such information to estimate DCC for any specific scenarios can only result in over- or under-estimation, leading to overly high or excessively low BLER. The same mechanism also applies to the I/M case. Thus, significantly higher BLER can be observed as a direct consequence of overly optimistic DCC projections in scenarios involving certain combinations of strong interfering sources or a small number of interferers.

Besides, introducing reuse-3 scheme only reduce the possible interference, but the aggregated nature of the interference information still remains. Thus, the I/M-R3 and the D/M-R3 cases also find it difficult to fully meet the reliability requirement. Conclusively, level M interference power information proves inadequate for industrial wireless networks. Per-link interference power is essential for achieving decent QoS.

\subsubsection{The impact of $m_I$}
\begin{figure}[!t]
      \centering
      \begin{minipage}[b]{0.48\textwidth}
            \centering
            \subfigure[]{\includegraphics[width = 0.48\textwidth]{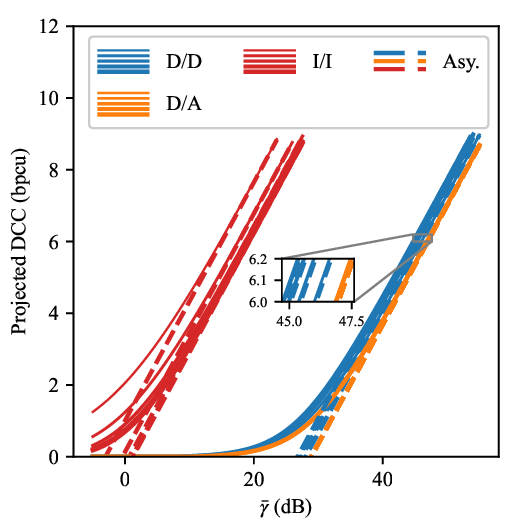}}
            \subfigure[]{\includegraphics[width = 0.48\textwidth]{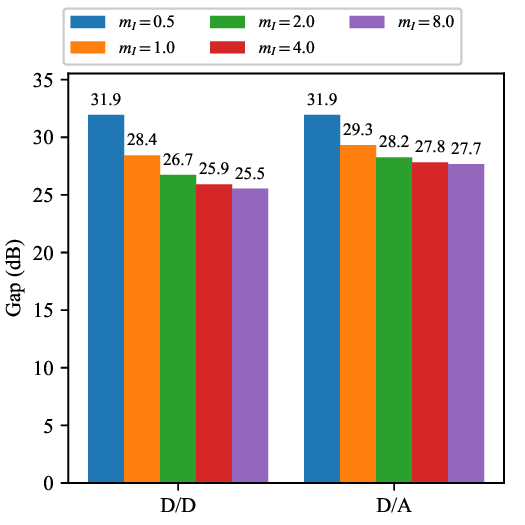}}
      \end{minipage}
      \caption{Performance against different $m_I$ for fixed rate allocation cases.
               (a) Projected average DCC. From the most thin to the most bold, the solid lines correspond to $m_I = 0.5, 1, 2, 4, 8$.
               (b) SINR gap between the ideal case and the cases with fixed rate allocation.}
\label{fig_difm_fixed}
\end{figure}
\begin{figure}[!t]
      \centering
      \begin{minipage}[b]{0.48\textwidth}
            \centering
            \subfigure[]{\includegraphics[width = 0.48\textwidth]{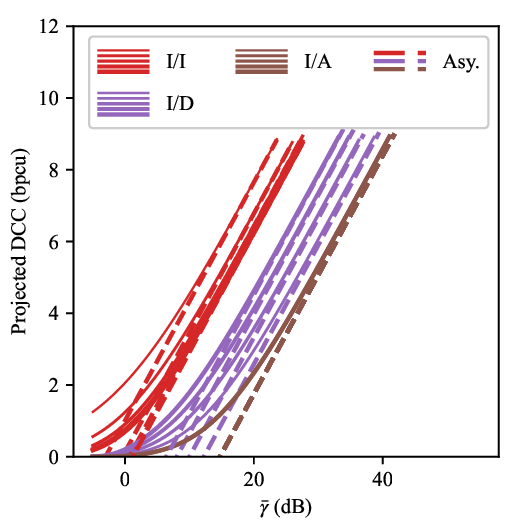}}
            \subfigure[]{\includegraphics[width = 0.48\textwidth]{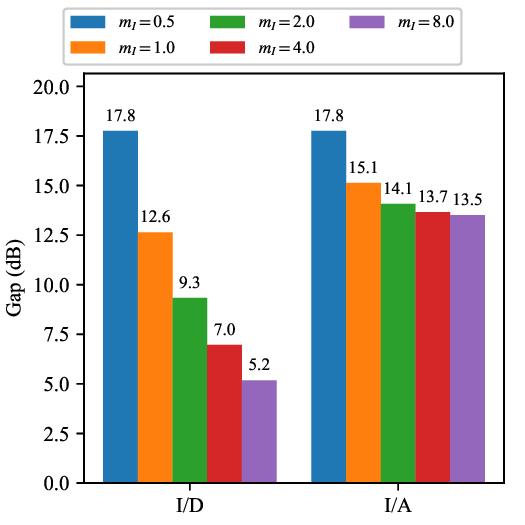}}
      \end{minipage}
      \caption{Performance against different $m_I$ for cases with instantaneous rate adjustment.
               (a) Projected average DCC. From the most thin to the most bold, the solid lines correspond to $m_I = 0.5, 1, 2, 4, 8$.
               (b) SINR gap between the ideal case and the cases with instantaneous rate adjustment.}
\label{fig_difm_insta}
\end{figure}
We discuss the effect on different rate adjustment strategies.
For fixed rate allocation (Fig.~\ref{fig_difm_fixed}), greater $m_I$ leads to slight DCC improvements, but the difference is negligible. Conversely, for cases with instantaneous rate adjustment (Fig.~\ref{fig_difm_insta}), less variable interference links significantly enhance projected DCC, approaching the ideal (I/I) case. Thus, obtaining interference power of level D precision is preferred over level A if possible.

Interestingly, for the ideal case, the trend reverses: as $m_I$ increases, the projected DCC gradually decreases. This occurs because a larger $m_I$ results in a narrower PDF for the interference link power. Consequently, the probability of interference power falling below the average diminishes, while the probability of above-average interference power also decreases. As the DCC is more sensitive to decreases of SINR than increases, their combined effect slightly diminishes the projected DCC.

\subsubsection{The impact of $m_0$}
Fig.~\ref{fig_difm0_fixed} illustrates performance against different $m_0$ for fixed rate allocation cases.
\begin{figure}[!t]
      \centering
      \begin{minipage}[b]{0.48\textwidth}
            \centering
            \subfigure[]{\includegraphics[width = 0.48\textwidth]{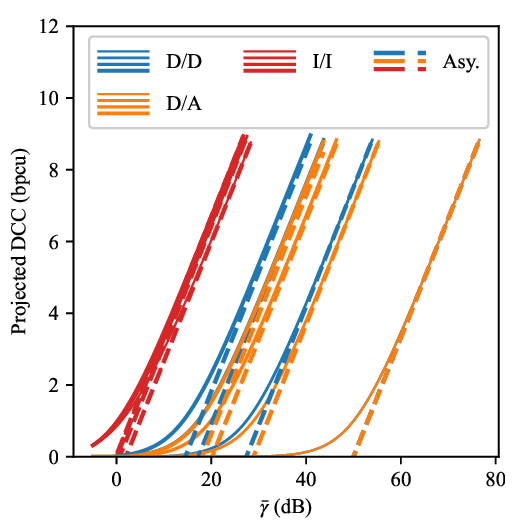}}
            \subfigure[]{\includegraphics[width = 0.48\textwidth]{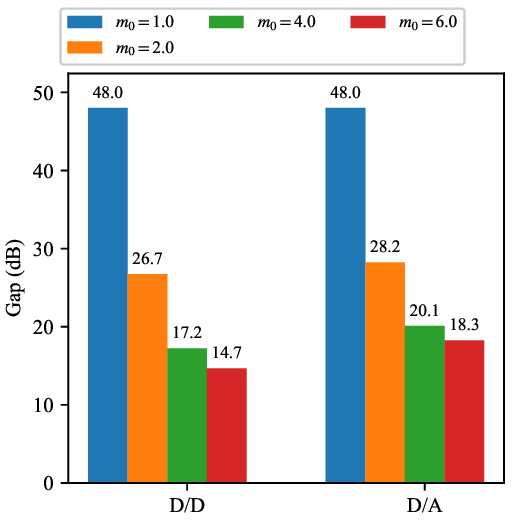}}
      \end{minipage}
      \caption{Performance against different $m_0$ for fixed rate allocation cases.
               (a) Projected average DCC. From the most thin to the most bold, the solid lines correspond to $m_0 = 1, 2, 4, 6$.
               (b) SINR gap between the ideal case and the cases with fixed rate allocation.}
\label{fig_difm0_fixed}
\end{figure}
As seen in Fig.~\ref{fig_difm0_fixed}(a), curves for both D/D and D/A cases rapidly shift leftward with increasing $m_0$, indicating significantly improved performance.

In Section \ref{perfAnal}, we analyze the asymptotic behavior, revealing parallel asymptotes for all cases, allowing us to determine the additional SINR needed to compensate for power information uncertainty. Fig.~\ref{fig_difm0_fixed}(b) shows a 48 dB gap rendering the system unusable for both D/D and D/A cases when $m_0$ is low. As $m_0$ increases, this SINR gap narrows, with the improved power information precision in the D/D case leading to a slightly greater reduction compared to the D/A case. Therefore, for fixed rate allocation, enhancing performance necessitates considering other techniques like multiple repetitions or MIMO, which increase both $m_0$ and effective SINR, ultimately leading to better DCC.

\begin{figure}[!t]
      \centering
      \begin{minipage}[b]{0.48\textwidth}
            \centering
            \subfigure[]{\includegraphics[width = 0.48\textwidth]{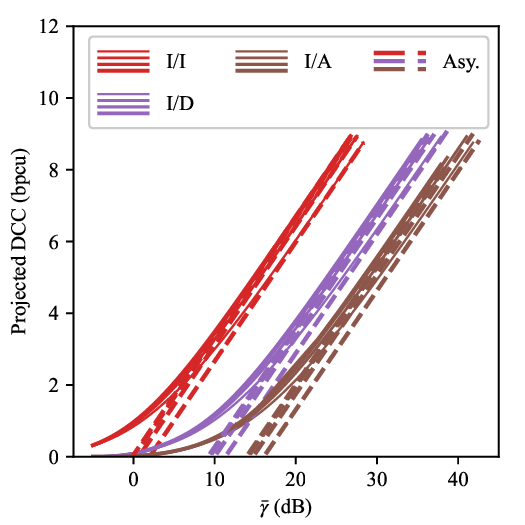}}
            \subfigure[]{\includegraphics[width = 0.48\textwidth]{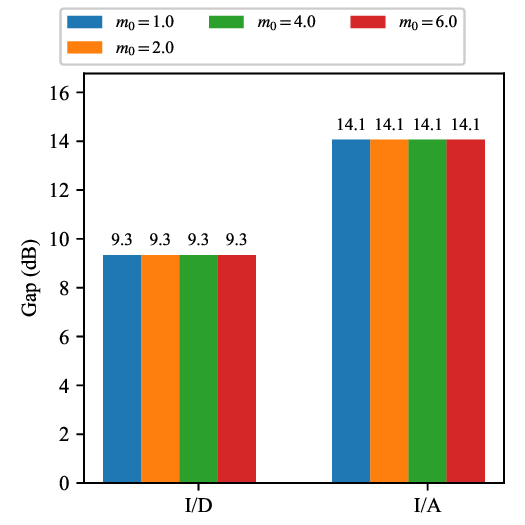}}
      \end{minipage}
      \caption{Performance against different $m_0$ for cases with instantaneous rate adjustment.
               (a) Projected average DCC. From the most thin to the most bold, the solid lines correspond to $m_0 = 1, 2, 4, 6$.
               (b) SINR gap between the ideal case and the cases with instantaneous rate adjustment.}
\label{fig_difm0_insta}
\end{figure}
While improved projected average DCC is observed for all three cases as $m_0$ increases (Fig.~\ref{fig_difm0_insta}(a)), their pace remains nearly identical. Consequently, Fig.~\ref{fig_difm0_insta}(b) shows a constant SINR gap regardless of $m_0$. Therefore, while multiple repetitions or MIMO can still be beneficial by increasing effective SINR for instantaneous rate adjustment cases, they might not significantly impact DCC compared to fixed rate allocation.

\subsubsection{The impact of $\varepsilon_\mathrm{req}$}
Smaller $\varepsilon_\mathrm{req}$ imposes a greater DCC penalty. Furthermore, different rate allocation strategies exhibit significant performance gaps, as illustrated in Fig.~\ref{fig_difeps_full}.

\begin{figure}[!t]
      \centering
      \begin{minipage}[b]{0.48\textwidth}
            \centering
            \subfigure[]{\includegraphics[width = 0.48\textwidth]{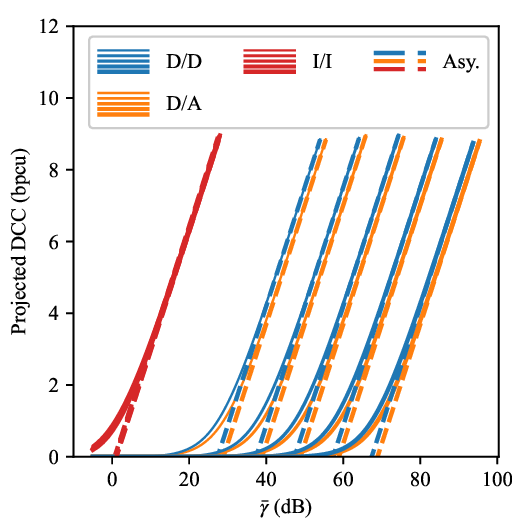}}
            \subfigure[]{\includegraphics[width = 0.48\textwidth]{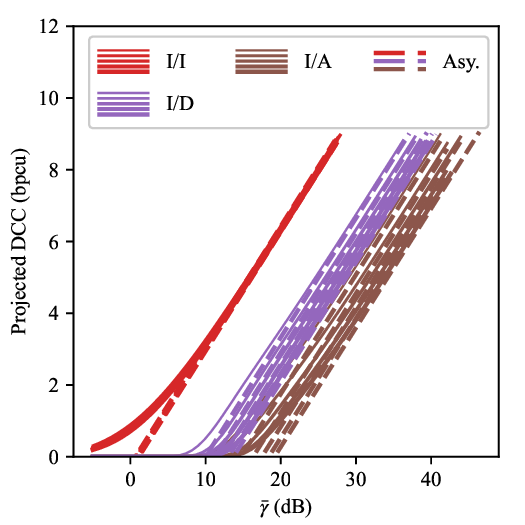}}
      \end{minipage}
      \caption{Projected average DCC against different $\varepsilon_\mathrm{req}$ for different cases.
               (a) For the cases with fixed rate allocation.
               (b) For the cases with instantaneous rate adjustment.
               In both figures, from the most thin to the most bold,
               the solid lines correspond to $\varepsilon_\mathrm{req} = 10^{-5}, 10^{-7}, 10^{-9}, 10^{-11}, 10^{-13}$.}
\label{fig_difeps_full}
\end{figure}
For fixed rate allocation (Fig.~\ref{fig_difeps_full}(a)), Corollary \ref{cor_epsForFixed} is demonstrated, with minimal difference between D/D and D/A cases. However, for instantaneous rate adjustment, the DCC penalty is less sensitive to $\varepsilon_\mathrm{req}$ compared to fixed rate allocation (Fig.~\ref{fig_difeps_full}(b)). While DCC still decreases as the logarithm of $\varepsilon_\mathrm{req}$ decreases, a sublinear trend is observed.

Therefore, for fixed rate allocation, multiple repetitions could boost DCC by relaxing $\varepsilon_\mathrm{req}$, as reception only fails when all replicas fail to decode correctly. Additionally, precise signal power knowledge becomes crucial for meeting extremely stringent reliability requirements.

\section{Conclusion}\label{End}
This paper investigates the impact of interference and signal power precision on achievable rate in industrial wireless communication systems. Analysis of average achievable rate within the FBL regime revealed that per-link interference power information is crucial for system usability and reliability, while instantaneous signal power information enhances performance under stringent QoS requirements.

Applying our theoretical insights, we further explored the impact of fast fading severity and reliability requirements. Our findings demonstrate that if signal power can be instantaneously obtained, the fading severity of interference would noticeably affect the system performance. Otherwise, much more profound effect could be observed for the signal link's fading and the reliability requirement. We anticipate these theorems, results, and insights will advance the development and deployment of 6G wireless networks, particularly by informing network analysis, planning, and resource management strategies within industrial settings.

\appendices
\section{Proof of Theorem \ref{th_avgRate}}\label{proof_th_avgRate}
The average rate is the expectation of the instantaneous rate.
Considering the linearity of the expectation, it is easy to get that
\begin{equation}
      \bar{R} = \mathbb{E}_\gamma \left[\log_2 (1+\gamma)\right] - \frac{Q^{-1}(\varepsilon_\mathrm{req})}{\sqrt{n}} \mathbb{E}_\gamma \left[\sqrt{V(\gamma)}\right].
      \label{eqa_avgRate}
\end{equation}
As stated above,
the "regularized" instantaneous SINR is subject to the F-distribution
with $f = \gamma / \bar{\gamma} \sim \mathrm{F}(m_0, m_I)$.
Thus, \eqref{eqa_avgRate} can be reformulated as
\begin{equation}
      \bar{R} = \underbrace{\mathbb{E}_f \left[\log_2 (1+\bar{\gamma}f)\right]}_{\mathcal{C}_1} - \frac{Q^{-1}(\varepsilon_\mathrm{req})}{\sqrt{n}} \underbrace{\mathbb{E}_f \left[\sqrt{V(\bar{\gamma}f)}\right]}_{\mathcal{C}_2}.
      \label{eqa_avgRateF}
\end{equation}

For $\mathcal{C}_1$, 
first, expanding the expression yields
\begin{equation}
      \begin{aligned}
            \mathcal{C}_1 = \int_0^{+\infty} &\frac{\log_2 (1+\bar{\gamma}f)}{\mathrm{B}(m_0, m_I)} \left(\frac{m_0}{m_I}\right)^{m_0} \times \\
            &f^{m_0-1} \left(1 + \frac{m_0}{m_I}f\right)^{-(m_0 + m_I)} \mathrm{d}f,
      \end{aligned}
\end{equation}
where $\mathrm{B}(a,b)$ is the Beta function\cite[5.12.1]{DLMF}.
Then, using the following series expansion
\begin{equation}
      \log (1+x) = \sum_{i=1}^{\infty} \frac{1}{i} \left(\frac{x}{x+1}\right)^i,
      \label{eqa_lnTaylor}
\end{equation}
it arrives
\begin{equation}
      \begin{aligned}
            &\mathcal{C}_1 = \underbrace{\frac{\log_2e}{\mathrm{B}(m_0, m_I)}\left(\frac{m_0}{m_I}\right)^{m_0}}_{\mathbb{C}} \times \\
            &\int_0^{+\infty} \sum_{k=1}^{\infty} \frac{1}{k} \left(\frac{\bar{\gamma}f}{\bar{\gamma}f+1}\right)^k f^{m_0-1}
            \left(1 + \frac{m_0}{m_I}f\right)^{-(m_0 + m_I)} \mathrm{d}f.
      \end{aligned}
\end{equation}
Next, by swapping the summation and the integration,
and solving the integral, it becomes
\begin{equation}
      \begin{aligned}
            &\mathcal{C}_1 = \mathbb{C} \sum_{k=1}^{\infty} \frac{\bar{\gamma}^k}{k(k+m_0)} \times \\
            &\underbrace{\lim_{f\to\infty} \left[f^{k+m_0} F_1 \left(
                  \begin{array}{c}
                        k+m_0 \\
                        k,m_0+m_I \\
                        k+m_0+1
                  \end{array}
                  ;-\bar{\gamma}f,-\frac{m_0}{m_I}f\right)\right]}_{\mathcal{L}_1},
      \end{aligned}
      \label{eqa_solvedintg}
\end{equation}
where, $F_1\left(\begin{array}{c} a \\ b_1, b_2 \\ c \end{array}; x, y\right)$ is the Appell's $F_1$ function\cite[16.13.1]{DLMF}.
Solving $\mathcal{L}_1$ will involve the analytical continuation of the $F_1$ function,
which could refer to \cite{Colavecchia2001}.
Thus, we have
\begin{equation}
      \begin{aligned}
            \mathcal{L}_1 = &\lim_{f\to\infty} \frac{\Gamma(k+m_0+1)\Gamma(-m_I)}{\Gamma(k+m_0)\Gamma(1-m_I)} (\bar{\gamma})^{-1} \times \\
                                            &\quad\left(1+\frac{m_0}{m_I}f\right)^{-(m_0 + m_I)}(1+\bar{\gamma}f)^{1-k}f^{k+m_0-1} \times \\
                                            &\quad F_1\left(\begin{array}{c}
                                                1 \\
                                                1-k-m_0,m_0+m_I\\
                                                1+m_I
                                            \end{array}
                                                ;-\frac{1}{\bar{\gamma}f}, \frac{\bar{\gamma}-\frac{m_0}{m_I}}{\frac{m_0}{m_I}\bar{\gamma}f+\bar{\gamma}}\right)\\
                                            &+\frac{\Gamma(k+m_0+1)\Gamma(m_I)}{\Gamma(k+m_0+m_I)} (\bar{\gamma})^{m_I-1} \times \\
                                            &\quad\left(1+\frac{m_0}{m_I}f\right)^{-(m_0 + m_I)}(1+\bar{\gamma}f)^{1-k}f^{k+m_0+m_I-1} \times \\
                                            &\quad G_2\left(\begin{array}{c}
                                                1-m_I,m_0+m_I \\
                                                m_I,1-k-m_0-m_I
                                            \end{array}
                                                ;\frac{1}{\bar{\gamma}f}, \frac{(\bar{\gamma}-\frac{m_0}{m_I})f}{1+\frac{m_0}{m_I}f}\right),
      \end{aligned}
      \label{eqa_L1expand}
\end{equation}
where $\Gamma(x)$ is the Gamma function\cite[5.2.1]{DLMF}, $G_2\left(\begin{array}{c} a, a^\prime \\ b, b^\prime\end{array}; x, y\right)$ is the Horn's $G_2$ function\cite[1.3(7)]{Srivastava1985}.
As $f$ goes to the infinity, the first term in \eqref{eqa_L1expand} vanishes,
and the $G_2$ function in the second term yields
\begin{equation}
      \begin{aligned}
            &\lim_{f\to\infty}G_2\left(\begin{array}{c}
                  1-m_I,m_0+m_I \\
                  m_I,1-k-m_0-m_I
               \end{array}
                  ;\frac{1}{\bar{\gamma}f}, \frac{(\bar{\gamma}-\frac{m_0}{m_I})f}{1+\frac{m_0}{m_I}f}\right) \\
            =&{_2}F_1\left(\begin{array}{c}
                  k+m_0,m_0+m_I \\
                  k+m_0+m_I
            \end{array}
                  ;1-\frac{m_0}{m_I\bar{\gamma}}\right),
      \end{aligned}
\end{equation}
where ${_2}F_1\left(\begin{array}{c} a, b \\ c \end{array}; x\right)$ is the Gaussian hypergeometric function\cite[15.2.1]{DLMF}.

Then, $\mathcal{L}_1$ becomes
\begin{equation}
      \begin{aligned}
      \mathcal{L}_1 = &\bar{\gamma}^{-(k+m_0)} \frac{\Gamma(k+m_0+1)\Gamma(m_I)}{\Gamma(k+m_0+m_I)} \times\\
      &{_2}F_1\left(\begin{array}{c}
            k+m_0,m_0+m_I \\
            k+m_0+m_I
      \end{array}
            ;1-\frac{m_0}{m_I\bar{\gamma}}\right).            
      \end{aligned}
      \label{eqa_L1}
\end{equation}
Finally, plugging \eqref{eqa_L1} into \eqref{eqa_solvedintg} will yield
\begin{equation}
      \begin{aligned}
            \mathcal{C}_1 = &\frac{\log_2e}{\mathrm{B}(m_0, m_I)} \left(\frac{m_0}{m_I\bar{\gamma}}\right)^{m_0} \times \\
            &\sum_{k=1}^{\infty}\frac{\Gamma(k+m_0+1)\Gamma(m_I)}{k(k+m_0)\Gamma(k+m_0+m_I)} \times \\
            &\qquad{_2}F_1\left(
                  \begin{array}{c}
                        k+m_0,m_0+m_I \\
                        k+m_0+m_I
                  \end{array}
                  ;1-\frac{m_0}{m_I\bar{\gamma}}\right).
      \end{aligned}
      \label{eq_C1}
\end{equation}

For $\mathcal{C}_2$, as $m_I=0.5$, the channel dispersion coincides with \eqref{eq_AWGNdispersion}. Thus,
\begin{equation}
      \begin{aligned}
            \mathcal{C}_2 = \int_0^{+\infty} &\sqrt{{1-\frac{1}{(1+\bar{\gamma}f)^{2}}}}\times\frac{\log_2 e}{\mathrm{B}(m_0, 0.5)} \left(2m_0\right)^{m_0} \times \\
            &f^{m_0-1} \left(1 + 2m_0f\right)^{-(m_0 + 0.5)} \mathrm{d}f.
      \end{aligned}
\end{equation}
Considering the expansion
\begin{equation}
      \sqrt{{1-\frac{1}{(1+\bar{\gamma}f)^{2}}}} = 1 - \sum_{n=1}^{\infty} \frac{(2n)!}{4^n (n!)^2(2n-1)} (1+\bar{\gamma}f)^{-2n},
\end{equation}
and following the similar deducting routine as that of $\mathcal{C}_1$ will yield
\begin{equation}
      \begin{aligned}
            \mathcal{C}_2 = &\log_2e - \frac{\log_2e}{\mathrm{B}(m_0, 0.5)} \left(\frac{2m_0}{\bar{\gamma}}\right)^{m_0} \times \\
            &\sum_{k=1}^{\infty} \frac{(2k)!\Gamma(m_0+1)\Gamma(0.5+2k)}{4^k (k!)^2 (2k-1)m_0\Gamma(2k+m_0+0.5)} \times \\
            &\qquad{_2}F_1\left(
                  \begin{array}{c}
                        m_0,m_0+0.5 \\
                        2k+m_0+0.5
                  \end{array}
                  ;1-\frac{2m_0}{\bar{\gamma}}\right).
      \end{aligned}
      \label{eq_C2}
\end{equation}

Lengthy as the equations seem, only three special functions (Beta function, Gamma function, and Gaussian hypergeometric function) are utilized, and they are well implemented in computing software. Besides, truncating the infinite summation results in approximation, and the level of precision depends on the number of the terms to be evaluated.

\section{Proof of Theorem \ref{th_asyII}}\label{proof_th_asyII}
The asymptote has the following expression:
\begin{equation}
      \bar{R} = k\bar{\gamma}\mathrm{(dB)} + b,
      \label{eqa_kxpb}
\end{equation}
where $\bar{\gamma}\mathrm{(dB)} = 10 \log_{10} \bar{\gamma}$.
Thus, if the expression of the slope $k$ and the intercept $b$ can be found,
the expression of the asymptote is then completed.

\subsection{The Slope}
The slope can be determined by
\begin{equation}
      k = \lim_{\bar{\gamma}\to\infty} \frac{\bar{R}}{10 \log_{10} \bar{\gamma}}.
      \label{eqa_slope}
\end{equation}
Recalling \eqref{eqa_avgRate}, we have
\begin{equation}
      k = \lim_{\bar{\gamma}\to\infty} \frac{\mathbb{E}_\gamma \left[\log_2 (1+\gamma)\right] - \frac{Q^{-1}(\varepsilon_\mathrm{req})}{\sqrt{n}} \mathbb{E}_\gamma \left[\sqrt{V(\gamma)}\right]}{10 \log_{10} \bar{\gamma}}.
\end{equation}
Note that the channel dispersion function is bounded.
As a result, the expectation is also bounded.
Thus, with $\bar{\gamma}$ goes to the infinity,
this term vanishes:
\begin{equation}
      k = \lim_{\bar{\gamma}\to\infty} \frac{\mathbb{E}_\gamma \left[\log_2 (1+\gamma)\right]}{10 \log_{10} \bar{\gamma}}.
      \label{eqa_IIk1}
\end{equation}

Plugging \eqref{eq_C1} into \eqref{eqa_IIk1},
and using the identity in \cite[15.4.23]{DLMF}, \eqref{eqa_IIk1} becomes
\begin{equation}
      \begin{aligned}
            k = &\frac{\log_2e}{\mathrm{B}(m_0, m_I)} \times \\
            &\lim_{\bar{\gamma}\to\infty} \frac{\sum_{k=1}^{\infty} \frac{1}{k(k+m_0)} \frac{\Gamma(k+m_0+1)\Gamma(m_I)}{\Gamma(k+m_0+m_I)} \frac{\Gamma(k+m_0+m_I)\Gamma(m_0)}{\Gamma(k+m_0)\Gamma(m_0+m_I)}}{10 \log_{10} \bar{\gamma}}.
      \end{aligned}
      \label{eqa_IIk3}
\end{equation}
Further, considering the identity
$\Gamma(z)=\frac{\Gamma(z+1)}{z},$
and the definition of the Beta function $\mathrm{B}(a, b) = \frac{\Gamma(a)\Gamma(b)}{\Gamma(a+b)}$,
we get
\begin{equation}
      \begin{aligned}
            k = \frac{\log_{2}10}{10} \lim_{\bar{\gamma}\to\infty} \frac{\sum_{k=1}^{\infty} \frac{1}{k}}{\log \bar{\gamma}}.
      \end{aligned}
      \label{eqa_IIk4}
\end{equation}
Finally, expanding $\log \bar{\gamma}$ with \eqref{eqa_lnTaylor} yields
\begin{equation}
      \begin{aligned}
            k = \frac{\log_{2}10}{10} \lim_{\bar{\gamma}\to\infty} \frac{\sum_{k=1}^{\infty} \frac{1}{k}}{\sum_{m=1}^{\infty} \frac{1}{m}\left(1 - \frac{1}{\bar{\gamma}}\right)^m} = \frac{\log_{2}10}{10}.
      \end{aligned}
      \label{eqa_IIkfinal}
\end{equation}

\subsection{The Intercept}
With the slope determined,
the intercept, if exists, can be computed by
\begin{equation}
      b = \lim_{\bar{\gamma}\to\infty} [\bar{R} - k (10 \log_{10} \bar{\gamma})].
      \label{eqa_intercept}
\end{equation}

Plugging \eqref{eqa_avgRate} and \eqref{eqa_IIkfinal} into \eqref{eqa_intercept} yields
\begin{equation}
      \begin{aligned}
            b = &\underbrace{\lim_{\bar{\gamma}\to\infty} \left\{\mathbb{E}_\gamma \left[\log_2 (1+\gamma)\right] - \log_2 \bar{\gamma}\right\}}_{f_1} \\
            &- \underbrace{\lim_{\bar{\gamma}\to\infty} \frac{Q^{-1}(\varepsilon_\mathrm{req})}{\sqrt{n}} \mathbb{E}_\gamma \left[\sqrt{V(\gamma)}\right]}_{f_2}.            
      \end{aligned}
      \label{eqa_IIb1}
\end{equation}

For $f_1$, we have
\begin{equation}
      \begin{aligned}
            f_1
            & =\lim_{\bar{\gamma}\to\infty} \mathbb{E}_\gamma \left[\log_2 (1+\gamma)\right] - \log_2 \bar{\gamma} \\
            & =\lim_{\bar{\gamma}\to\infty} \mathbb{E}_{\mathcal{S},\mathcal{I}_E} \left[\log_2 \frac{\mathcal{S}}{\mathcal{I}_E}\right] - \log_2 \bar{\gamma}\\
            & =\lim_{\bar{\gamma}\to\infty} \log_2e \{\mathbb{E}_\mathcal{S} \left[\log \mathcal{S}\right] - \mathbb{E}_{\mathcal{I}_E} \left[\log \mathcal{I}_E\right] - \log \bar{\gamma}\}\\
            & =\log_2e [\psi(m_0) - \log m_0 - \psi(m_I) + \log m_I].
      \end{aligned}
      \label{eqa_IIb11}
\end{equation}

Meanwhile, for $f_2$, we have
\begin{equation}
      \begin{aligned}
            f_2 = \lim_{\bar{\gamma}\to\infty} \frac{Q^{-1}(\varepsilon_\mathrm{req})}{\sqrt{n}} \mathbb{E}_f \left[\sqrt{V(\bar{\gamma}f)}\right].
      \end{aligned}
      \label{eqa_IIb12}
\end{equation}
As $V(\infty) = \frac{(\log_2 e)^2}{2m_I}$, it directly yields
\begin{equation}
      f_2 = \frac{\log_2e}{\sqrt{2m_In}}Q^{-1}(\varepsilon_\mathrm{req}).
      \label{eqa_IIb12final}
\end{equation}

Finally, plugging \eqref{eqa_IIb11} and \eqref{eqa_IIb12final} into \eqref{eqa_IIb1} will result in
\begin{equation}
      b = \log_2e \left[\psi(m_0) - \psi(m_I) + \log \frac{m_I}{m_0} - \frac{Q^{-1}(\varepsilon_\mathrm{req})}{\sqrt{2m_In}}\right].
      \label{eqa_IIbfinal}
\end{equation}

\section{Proof of Lemma \ref{le_asyCD}}\label{proof_le_asyCD}
The asymptote has the same form as \eqref{eqa_kxpb}, with possibly different slope and intercept.
Thus, our goal is to solve them.

\subsection{The Slope}\label{proof_th_asyID_k}
Using the linearization technique introduced in \cite{Makki2014},
\eqref{eq_expRateMeanID} can be approximated as
\begin{equation}
      \varepsilon_\mathrm{req} \approx \omega\sqrt{n} \int_{\gamma_1}^{\gamma_2} F_\gamma(x)\mathrm{d}x,
      \label{eqa_IDkepsLin}
\end{equation}
where $F_\gamma(x)$ is the CDF of $\gamma$,
$$\omega = \left[2^{R}\sqrt{2\pi V(2^{R}-1)}\right]^{-1},$$ and
$$\gamma_{1,2} = 2^{R} \left[ 1 \mp \sqrt{\frac{\pi V(2^{R}-1)}{2n}} \right] -1.$$

Further, as $\lim_{\bar{\gamma}\to\infty} R = \infty$,
we have $$\lim_{\bar{\gamma}\to\infty} V(2^{R}-1) = V(\infty) = \frac{(\log_2e)^2}{2m_I}.$$

Thus, taking the limits on both sides of \eqref{eqa_IDkepsLin},
and let $z = x / \bar{\gamma}$,
we have
\begin{equation}
      \begin{aligned}
            \varepsilon_\mathrm{req} = \lim_{\bar{\gamma}\to\infty}& \log 2\sqrt{\frac{m_In}{\pi}}\left[ \frac{2^{R}}{\bar{\gamma}} \right]^{-1}\times \\
            &\int_{\frac{2^{R}}{\bar{\gamma}}\left[1-\log_2e\sqrt{\frac{\pi}{4m_In}}\right]}^{\frac{2^{R}}{\bar{\gamma}}\left[1+\log_2e\sqrt{\frac{\pi}{4m_In}}\right]} F_\gamma(z)\mathrm{d}z.
      \end{aligned}
      \label{eqa_IDkepsLim}
\end{equation}
For simplicity, we denote $G_+ = 1+\log_2e\sqrt{\frac{\pi}{4m_In}}$,
and $G_- = 1-\log_2e\sqrt{\frac{\pi}{4m_In}}$.
Let $a = \frac{2^{R}}{\bar{\gamma}}$, it yields
\begin{equation}
      \varepsilon_\mathrm{req} = \lim_{\bar{\gamma}\to\infty} \log 2 \sqrt{\frac{m_In}{\pi}}a^{-1}
      \int_{aG_-}^{aG_+}
      F_\gamma(z)\mathrm{d}z.
      \label{eqa_IDkepsA}
\end{equation}
One can observe that when $\bar{\gamma}$ goes to infinity,
$a$ is a function of $\varepsilon_\mathrm{req}$.
As $\varepsilon_\mathrm{req}$ is a constant,
$a$ is also invariable.
Simply rearranging the expression of $a$ will get
\begin{equation}
      R = \log_2 a\bar{\gamma} = \log_2 \bar{\gamma} + \log_2 a.
      \label{eqa_IDkR}
\end{equation}
Thus,
\begin{equation}
      k = \lim_{\bar{\gamma}\to\infty} \frac{R}{10\log_{10} \bar{\gamma}}
      = \lim_{\bar{\gamma}\to\infty} \frac{\log_2 \bar{\gamma}}{10\log_{10} \bar{\gamma}}
      = \frac{\log_{2}10}{10}.
      \label{eqa_IDkfinal}
\end{equation}

Additionally, plugging \eqref{eqa_IDkR} and \eqref{eqa_IDkfinal} into \eqref{eqa_kxpb},
we will immediately get
\begin{equation}
      b = \lim_{\bar{\gamma}\to\infty} \log_2 a.
      \label{eqa_IDbC3}
\end{equation}

\subsection{The Intercept}

As $\gamma$ is subject to the inverse-gamma distribution,
taking the CDF of the distribution into \eqref{eqa_IDkepsA} yields
\begin{equation}
      \begin{aligned}
            \varepsilon_\mathrm{req} &= \lim_{\bar{\gamma}\to\infty} \frac{m_Ia^{-1}\log 2}{\Gamma(m_I)}
                              \sqrt{\frac{m_In}{\pi}}
                              \int_{aG_-m_I^{-1}}^{aG_+m_I^{-1}}
                              \Gamma\left(m_I,x^{-1}\right)\mathrm{d}x \\
                        &\approx \lim_{\bar{\gamma}\to\infty} \frac{m_Ia^{-1}\log 2}{\Gamma(m_I)}\sqrt{\frac{m_In}{2\pi}}
                        \int_{m_I\left(aG_+\right)^{-1}}^{m_I\left(aG_-\right)^{-1}}
                        x^{m_I-3}e^{-x}\mathrm{d}x,
      \end{aligned}
      \label{eqa_IDbint}
\end{equation}
where $\Gamma(x, y)$ is the upper incomplete gamma function.

Unfortunately, the integral is still intractable w.r.t. $a$.
And utilizing the mean-value theorem helps us get rid of the integral:
\begin{equation}
      \varepsilon_\mathrm{req} = \lim_{\bar{\gamma}\to\infty} \frac{m_I^2a^{-2}}{G_+G_-\Gamma(m_I)} \varphi^{m_I-3}e^{-\varphi},
      \label{eqa_IDbmeanvalue}
\end{equation}
where $\varphi\in\left[\frac{m_I}{aG_+}, \frac{m_I}{aG_-}\right]$.
However, $\varphi$ is uncertain.
Considering the reliability,
taking $\varphi^* = \frac{m_I}{aG_+}$ will result in a lower bound.
Plugging $\varphi^*$ into \eqref{eqa_IDbmeanvalue}, after some rearrangement, for $m_I \ne 1$,
we get
\begin{equation}
      \lim_{\bar{\gamma}\to\infty} a = \lim_{\bar{\gamma}\to\infty} -\frac{m_I}{(m_I-1)G_+W\left(-\frac{\sqrt[m_I-1]{\mathbb{A}}}{m_I-1}\right)},
      \label{eqa_IDba}
\end{equation}
where $\mathbb{A} = \varepsilon_\mathrm{req} G_+^{-1} G_- \Gamma(m_I)$.
While for $m_I = 1$, the rearrangement yields a simpler result
\begin{equation}
      \lim_{\bar{\gamma}\to\infty} a = \lim_{\bar{\gamma}\to\infty} \left[ G_+ \left( \log G_+ - \log G_- - \log\varepsilon_\mathrm{req} \right) \right]^{-1}.
      \label{eqa_IDba-1}
\end{equation}

Finally, for $m_I \ne 1$, taking \eqref{eqa_IDba} into \eqref{eqa_IDbC3} results in $b=\mathcal{C}_4$ with
\begin{equation}
      \begin{aligned}
            \mathcal{C}_4 \approx &\log_2m_I - \log_2G_+ - \log_2\left[-\frac{W\left(-\frac{\sqrt[m_I-1]{\mathbb{A}}}{m_I-1}\right)}{m_I-1}\right],
      \end{aligned}
      \label{eq_C3}
\end{equation}
where
\begin{equation}
      \mathbb{A} = \varepsilon_\mathrm{req}\Gamma(m_I)\left[1-\frac{(\log_2e)^2\pi}{4m_In}\right]G_+^{-2},
\end{equation}
and $W(x)$ is the Lambert $W$ function\cite[4.13]{DLMF}.

Whereas for $m_I = 1$, taking \eqref{eqa_IDba-1} into \eqref{eqa_IDbC3} results in
\begin{equation}
      \begin{aligned}
            \mathcal{C}_4 \approx &- \log_2G_+ - \log_2\left\{\log\left[\frac{\varepsilon_\mathrm{req} G_-}{G_+}\right]\right\},
      \end{aligned}
      \label{eq_C3-1}
\end{equation}
which concludes the proof.

\section{Proof of Theorem \ref{th_asyDD}}\label{proof_th_asyDD}
The asymptote has the same form of expression as \eqref{eqa_kxpb}.
Using Corollary \ref{cor_slopeInvar},
the asymptote becomes:
\begin{equation}
      \bar{R} = \frac{\log_2 10}{10}\bar{\gamma}\mathrm{(dB)} + b.
      \label{eqa_DDkxpb}
\end{equation}

Also, as Corollary \ref{cor_slopeInvar} suggests
that the derivation in Appendix \ref{proof_th_asyID_k} is distribution-independent,
we can directly utilize \eqref{eqa_IDkepsA}
to continue this proof.\footnote{We also have $G_+ = 1+\log_2e\sqrt{\frac{\pi}{4m_In}}$, and $G_- = 1-\log_2e\sqrt{\frac{\pi}{4m_In}}$ in this appendix.}
Thus, we get
\begin{equation}
      \begin{aligned}
            \varepsilon_\mathrm{req} &= \lim_{\bar{\gamma}\to\infty} \log 2 \sqrt{\frac{m_In}{\pi}}a^{-1}
            \int_{aG_-}^{aG_+}
            F_{\hat{\gamma}}(x)\mathrm{d}x \\
            &= \lim_{\bar{\gamma}\to\infty} \frac{\log 2}{am_0\mathrm{B}(m_0, m_I)} \sqrt{\frac{m_In}{\pi}} \times \\
            &\int_{aG_-}^{aG_+}
            \left(\frac{m_0x}{m_0x+m_I}\right)^{m_0} \left(\frac{m_I}{m_0x+m_I}\right)^{m_I-1} \times \\
            &{_2}F_1\left(
                  \begin{array}{c}
                        1,1-m_I\\
                        m_0+1
                  \end{array}
                  ;-\frac{m_0}{m_I}x\right)
            \mathrm{d}x.
      \end{aligned}
      \label{eqa_DDepsA}
\end{equation}

The fact is that, from \eqref{eqa_DDepsA},
$\varepsilon_\mathrm{req}$ equals to the mean value of the CDF within the given interval.
As the derivative of the CDF, i.e. the PDF,
of $\hat{\gamma}$ is always positive within its domain,
the CDF is strictly monotonically increasing.
Thus, with more stringent reliability requirement,
the bounds of the integration interval will be increasingly close to $0$.
As a result, we have the following approximations
\begin{equation}
      \begin{aligned}
            \varepsilon_\mathrm{req} &\approx \lim_{\bar{\gamma}\to\infty} \frac{m_0^{m_0-1}\log 2}{a\mathrm{B}(m_0, m_I)} \sqrt{\frac{m_In}{\pi}} \times \\
            &\int_{aG_-}^{aG_+}
            x^{m_0}(m_0x+m_I)^{1-m_0-m_I}\mathrm{d}x \\
            &\approx \lim_{\bar{\gamma}\to\infty} \frac{a^{m_0}m_0^{m_0-1}\log 2}{m_I^{m_0}\mathrm{B}(m_0, m_I)}\sqrt{\frac{m_In}{\pi}} %\times \\
            \underbrace{\left\{G_+^{m_0+1}-G_-^{m_0+1}\right\}}_{\mathbb{S}}.
      \end{aligned}
      \label{eqa_DDbepsfinal}
\end{equation}

Rearranging \eqref{eqa_DDbepsfinal}, we get
\begin{equation}
      \lim_{\bar{\gamma}\to\infty} a = \frac{m_I}{m_0}\left[\frac{\varepsilon_\mathrm{req} m_0(m_0+1)\mathrm{B}(m_0, m_I)}{\mathbb{S}\log 2\sqrt{\frac{m_In}{\pi}}}\right]^{\frac{1}{m_0}}.
      \label{eqa_DDba}
\end{equation}
Finally, taking \eqref{eqa_DDba} into \eqref{eqa_IDbC3} yields $b=\mathcal{C}_5$ with
\begin{equation}
      \mathcal{C}_5 = \log_2 \frac{m_I}{m_0} 
      + m_0^{-1}\log_2 \frac{\varepsilon_\mathrm{req} m_0 (m_0+1)\mathrm{B}(m_0, m_I)\sqrt{\pi}}{\mathbb{S}\sqrt{m_In}\log2},
      \label{eq_C5}
\end{equation}
where
\begin{equation}
      \mathbb{S} = G_+^{m_0+1} - G_-^{m_0+1}.
      \label{eq_C5S}
\end{equation}

\ifCLASSOPTIONcaptionsoff
  \newpage
\fi
\bibliographystyle{IEEEtran}
\balance
% argument is your BibTeX string definitions and bibliography database(s)
\bibliography{biblib, extralib, fblbib, fblbib2}
\end{document}